\long\def\ca#1\cb{} 
\newcommand{\ket}[1]{|#1\rangle}               
\newcommand{\bra}[1]{\langle #1|}              
\newcommand{\dya}[1]{\ket{#1}\!\bra{#1}}
\newcommand{\poly}{\operatorname{poly}}
\newcommand{\EC}{\mathcal{E}}
\newcommand{\HC}{\mathcal{H}}
\newcommand{\LC}{\mathcal{L}}
\newcommand{\OC}{\mathcal{O}}
\newcommand{\RC}{\mathcal{R}}
\newcommand{\SC}{\mathcal{S}}
\newcommand{\Tr}{{\rm Tr}}
\newcommand{\Var}{{\rm Var}}
\newcommand{\Cov}{{\rm Cov}}
\renewcommand{\geq}{\geqslant}
\renewcommand{\leq}{\leqslant}
\DeclareMathOperator*{\argmin}{arg\,min}
\renewcommand{\vec}[1]{\boldsymbol{#1}}  
\newcommand{\ad}{^\dagger}
\newcommand*{\id}{\openone}
\newcommand{\thv}{\vec{\theta}}
\newtheorem{theorem}{Theorem}
\newtheorem{corollary}{Corollary}
\newtheorem{proposition}{Proposition}
\newtheorem{suplemma}{Supplemental Lemma}
\newtheorem{remark}{Remark}
\begin{document}

\title{Subtleties in the trainability of quantum machine learning models}

\author{Supanut Thanaslip}
\affiliation{Theoretical Division, Los Alamos National Laboratory, Los Alamos, New Mexico 87545, USA}
\affiliation{Centre for Quantum Technologies, National University of Singapore, 3 Science Drive 2 117543, Singapore}

\author{Samson Wang}
\affiliation{Theoretical Division, Los Alamos National Laboratory, Los Alamos, New Mexico 87545, USA}
\affiliation{Imperial College London, London, UK}

\author{Nhat A. Nghiem}
\affiliation{Theoretical Division, Los Alamos National Laboratory, Los Alamos, New Mexico 87545, USA}
\affiliation{Department of Physics and Astronomy, State University of New York at Stony Brook, Stony Brook, New York 11794-3800, USA}

\author{Patrick J. Coles}
\affiliation{Theoretical Division, Los Alamos National Laboratory, Los Alamos, New Mexico 87545, USA}

\author{M. Cerezo}
\affiliation{Information Sciences, Los Alamos National Laboratory, Los Alamos, NM 87545, USA}
\affiliation{Center for Nonlinear Studies, Los Alamos National Laboratory, Los Alamos, New Mexico 87545, USA}

\begin{abstract}
A new paradigm for data science has emerged, with quantum data, quantum models, and quantum computational devices. This field, called Quantum Machine Learning (QML), aims to achieve a speedup over traditional machine learning for data analysis. However, its success usually hinges on efficiently training the parameters in quantum neural networks, and the field of QML is still lacking theoretical scaling results for their trainability. Some trainability results have been proven for a closely related field called Variational Quantum Algorithms (VQAs).  While both fields involve training a parametrized quantum circuit, there are crucial differences that make the results for one setting not readily applicable to the other. In this work we bridge the two frameworks  and  show that gradient scaling results for VQAs can also be applied to study the gradient scaling of QML models. Our results indicate that features deemed detrimental for VQA trainability can also lead to issues such as barren plateaus in QML. Consequently, our work has implications for several QML proposals in the literature. In addition, we provide theoretical and numerical evidence that QML models exhibit further trainability issues not present in VQAs, arising from the use of a training dataset. We refer to these as dataset-induced barren plateaus.  These results are most relevant when dealing with classical data, as here the choice of embedding scheme (i.e., the map between classical data and quantum states) can greatly affect the gradient scaling.
\end{abstract}

\maketitle

\section{Introduction}

The future of data analysis is incredibly exciting. The quantum revolution promises new kinds of data, new kinds of models, and new information processing devices. This is all made possible because small-scale quantum computers are currently available, while larger-scale ones are anticipated in the future~\cite{preskill2018quantum}. The mere fact that users will run jobs on these devices, preparing interesting quantum states, implies that new datasets will be generated. These are called quantum datasets~\cite{perrier2021qdataset,schatzki2021entangled}, as they exist on the quantum computer and hence must be analyzed on the quantum computer. Naturally this has led to the proposal of new models, so-called Quantum Neural Networks (QNNs)~\cite{schuld2014quest}, for analyzing (classifying, clustering, etc.) such data. Different architectures have been proposed for these models: dissipative QNNs~\cite{beer2020training}, convolutional QNNs~\cite{cong2019quantum}, recurrent QNNs~\cite{bausch2020recurrent}, and others~\cite{farhi2018classification,verdon2019quantum}.

Using quantum computers for data analysis is often called Quantum Machine Learning (QML)~\cite{biamonte2017quantum,schuld2015introduction}. This paradigm is general enough to also allow for analysis of classical data. One simply needs an embedding map that first maps the classical data to quantum states, and then such states can be analyzed by the QNN~\cite{havlivcek2019supervised,larose2020robust,lloyd2020quantum}. Here, the hope is that by accessing the exponentially large dimension of the Hilbert space and quantum effects like superposition and entanglement, QNNs can outperform their classical counterparts (i.e., neural networks) and achieve a coveted quantum advantage~\cite{huang2021information,huang2021power,kubler2021inductive,liu2021rigorous,aharonov2021quantum}.

Despite the tremendous promise of QML, the field is still in its infancy and rigorous results are  needed to guarantee its success. Similar to classical machine learning, here one also wishes to achieve small training error~\cite{larocca2021theory} and small generalization error~\cite{banchi2021generalization,du2021efficient}, with the second usually hinging on the first. Thus, it is crucial to study how certain properties of a QML model can hinder or benefit its parameter trainability.

Such trainability analysis has become a staple in a closely related field called Variational Quantum Algorithms (VQAs)~\cite{cerezo2020variationalreview}. This field aims to optimize over sets of quantum circuits to discover efficient versions of quantum algorithms for accomplishing various tasks, such as finding ground states~\cite{peruzzo2014variational}, solving linear systems of equations~\cite{bravo2020variational,huang2019near,xu2019variational}, quantum compiling~\cite{khatri2019quantum,sharma2019noise}, factoring~\cite{anschuetz2019variational}, and dynamical simulation~\cite{cirstoiu2020variational,commeau2020variational,endo2020variational,li2017efficient}. In this field, a great deal of effort has been put forward towards analyzing and avoiding the barren plateau  phenomenon~\cite{mcclean2018barren,cerezo2020cost,larocca2021diagnosing,marrero2020entanglement,patti2020entanglement,holmes2021connecting,holmes2021barren,huembeli2021characterizing,zhao2021analyzing,wang2020noise}.  When a VQA exhibits a barren plateau, the cost function gradients vanish  exponentially with the problem size, leading to an exponentially flat optimization landscape.  Barren plateaus greatly impact the trainability of the parameters as it is impossible to navigate  the flat landscape without expending an exponential amount of resources~\cite{cerezo2020impact,arrasmith2020effect,wang2021can}.

Given the large body of literature studying barren plateaus in VQAs, the natural question that arises is: Are the gradient scaling and barren plateau results also applicable to QML? Making this connection is crucial, as it is not uncommon for QML proposals to employ certain features that have been shown to be detrimental for the trainability of VQAs (such as deep unstructured circuits~\cite{mcclean2018barren,cerezo2020cost} or global measurements~\cite{cerezo2020cost}). Unfortunately, it is not straightforward to directly employ VQA gradient scaling results in QML models. Moreover, one can expect that in addition to the trainability issues arising in variational algorithms, other problems can potentially appear in QML schemes. This is due to the fact that QML models are generally more complex. For instance, in QML one needs to deal with datasets~\cite{schatzki2021entangled}, which further require the use of an embedding scheme when dealing with classical data~\cite{havlivcek2019supervised,lloyd2020quantum,schuld2020circuit,larose2020robust}.  Additionally,  QML loss functions can be more complicated than  VQA cost functions, as the latter are usually linear functions of expectation values of some set of operators.

In this work we study the trainability and the existence of barren plateaus in QML models. Our work represents a general treatment that goes beyond previous analysis of gradient scaling and trainability in specific QML models~\cite{pesah2020absence,sharma2020trainability,liu2021presence,abbas2020power,haug2021optimal,kieferova2021quantum,kiani2021quantum,tangpanitanon2020expressibility}. Our main results are two-fold. First, we rigorously connect the scaling of gradients in VQA-type cost functions and QML loss functions, so that barren plateau results in the variational algorithms literature can be used to study the trainability of QML models. This implies that known sources of barren plateaus extend to the realm of training QNNs, and thus, that many proposals in the literature need to be revised. Second, we present theoretical and numerical evidence  that certain features in the datasets and embedding can additionally greatly affect the model's trainability. These results show that additional care must be taken when studying the trainability of QML models. Moreover, they constitute a novel source for untrainability: dataset-induced barren plateaus.

\section{Framework}

\subsection{Quantum Machine Learning}\label{sec:framework-QML}

In this work, we consider supervised Quantum Machine Learning (QML) tasks. First, as depicted in Fig.~\ref{fig:1}(a) let us consider the case where one has classical data. Here, the dataset is of the form $\{\vec{x}_i,y_i\}$, where $\vec{x}_i\in X$ is some classical  data  (e.g., a real-valued vector), and $y_i\in Y$ are values or labels associated with each $\vec{x}_i$ according to some (unknown) model $h:X\rightarrow Y$.   One generates a training set $\SC=\{\vec{x}_i,y_i\}^{N}_{i=1}$ of size $N$  by sampling from the dataset according to some probability distribution. Using $\SC$ one trains a QML model, i.e. a parametrized map $\tilde{h}_{\thv}:X\rightarrow Y$, such that its predictions agree with those of $h$ with high probability on $\SC$ (low training error), and on previously unseen cases (low generalization error).

For the QML model to access the exponentially large dimension of the Hilbert space, one needs to encode the classical data into a quantum state. As shown in Fig.~\ref{fig:1}(a), one initializes $m$ qubits in a fiduciary state such as $\ket{\vec{0}}=\ket{0}^{\otimes m}$ and sends it through a Completely Positive Trace-Preserving (CPTP) map $\EC_{\vec{x}_i}^E$, so that the outputs are of the form 
\begin{equation}\label{eq:embedding-output}
    \rho_i=\EC_{\vec{x}_i}^E(\dya{\vec{0}})\,.
\end{equation}
For instance, $\EC_{\vec{x}_i}^E$ can be a unitary channel given by the action of a parametrized quantum circuit whose gate rotation angles depends on the values in $\vec{x}_i$. While in general the embedding can be in itself trainable~\cite{lloyd2020quantum, hubregtsen2021training, thumwanit2021trainable}, here the embedding is fixed. 

Next, consider the case when the dataset is composed of quantum data~\cite{cong2019quantum,schatzki2021entangled}. As one can see from Fig.~\ref{fig:1}(b), here the dataset is of the form $\{\rho_i,y_i\}$, where $\rho_i\in R$, and where  $R\subseteq \HC$ is a set of quantum states in the Hilbert space $\HC$. Then, $y_i\in Y$ are values or labels associated with each quantum state according to some (unknown) model $h:R\rightarrow Y$. In what follows we simply denote as $\rho_i$ the states obtained from the dataset, and clarify when appropriate if they were produced using an embedding scheme or not.

\begin{figure}[t!]
	\includegraphics[width= 1 \columnwidth]{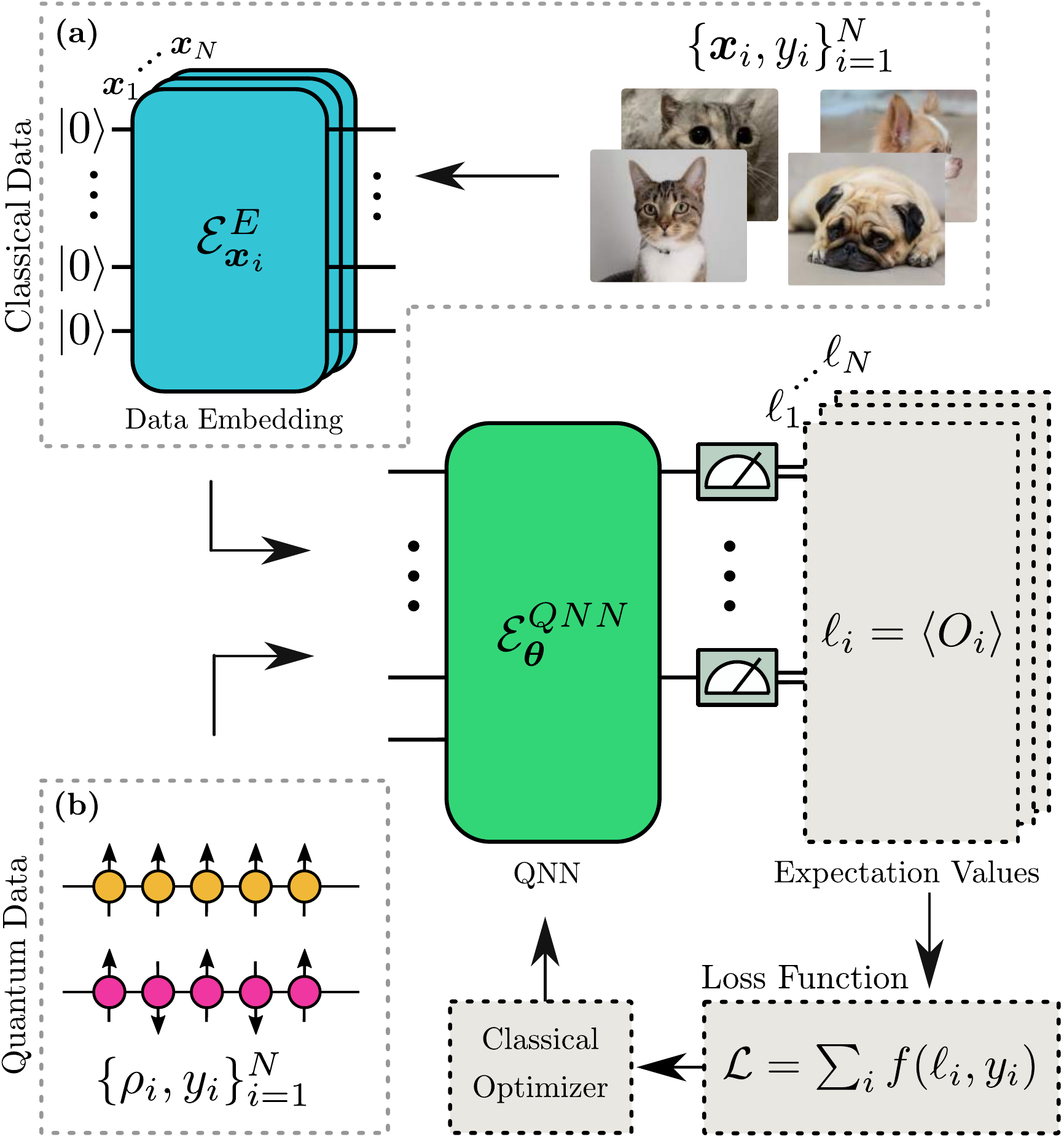}
	\caption{\textbf{Schematic diagram of a QML task.} Consider a supervised learning task where the training dataset is either (a) classical  or (b) quantum. The classical data points are of the form $\{\vec{x}_i,y_i\}$, where $\vec{x}_i\in X$ are input data  (e.g. pictures) and  $y_i\in Y$ are labels associated with each $\vec{x}_i$ (e.g., cat/dog). An embedding channel $\EC_{\vec{x}_i}^E$ maps the classical data onto quantum states $\rho_i$. Alternatively, quantum data points are of the form $\{\rho_i,y_i\}$, where $\rho_i$ are  quantum states in a Hilbert space $\HC$, each with associated labels $y_i\in Y$ (e.g., ferromagnetic/paramagnetic phases). The quantum states (coming from classical or quantum datasets) are then sent through a parametrized quantum neural network (QNN), $\EC_{\thv}^{QNN}$. By performing measurements on the output states one estimates expectation values $\ell_i\equiv \ell_i(\thv;y_i)$ which are then used to estimate the loss function $\LC(\thv)$. Finally, a classical optimizer decides how to adjust the QNN parameters, and the loop repeats multiple times to minimize the loss function.   }
	\label{fig:1}
\end{figure}

The quantum states $\rho_i$ are then sent through a QNN, i.e., a parametrized CPTP map $\EC_{\thv}^{QNN}$. Here, the outputs of the QNN are $n$-qubit quantum states (with $n\leq m$) of the form 
\begin{equation}\label{eq:qnn-output}
    \rho_i(\thv)=\EC_{\thv}^{QNN}(\rho_i)\,.
\end{equation}
We note that Eq.~\eqref{eq:qnn-output} encompasses most widely used QNN architectures. For instance, in many cases  $\EC_{\thv}^{QNN}$ is simply a trainable parametrized quantum circuit~\cite{farhi2018classification,cong2019quantum,beer2020training,bausch2020recurrent}. 
Here, $\thv$ is a vector of continuous variables (e.g., gate rotation angles). More generally, $\thv$ could also contain discrete variables (gate placements)~\cite{grimsley2019adaptive,du2020quantum,bilkis2021semi}. However, for simplicity, here we consider the case when one only optimizes over continuous QNN parameters.

The model predictions are obtained by measuring qubits of the QNN output states $\rho_i(\thv)$ in Eq.~\eqref{eq:qnn-output}. That is, for a given data point $(\rho_i,y_i)$, one estimates the quantity   
\begin{align}\label{eq:measurement}
\begin{split}
    \ell_i(\thv;y_i) =\Tr[\rho_i(\thv)O_{y_i}]\,,
\end{split}
\end{align}
where $O_{y_i}$ is a Hermitian operator. Here we recall that we use the term global measurement when $O_{y_i}$ acts non-trivially on all $n$ qubits. On the other hand, we say that the measurement is local when $O_{y_i}$ acts non-trivially on a small constant number of qubits (e.g. one or two qubits). 
Finally, throughout training, the success of the QML model is quantified via a loss function $\LC(\thv)$ that takes the form
\begin{align}\label{eq:generic-loss-function}
    \LC(\thv)& = \sum_{i=1}^N f(\ell_i(\thv;y_i),y_i)\,,
\end{align}
where $f(.)$ is first-order differentiable. By employing a classical optimizer, one trains the parameters in the QNN  to solve the optimization task
\begin{equation}
    \thv_*=\argmin_{\thv} \LC(\thv)\,.
\end{equation}
Finally,  one tests the generalization performance by assigning labels with the optimized model $h_{\vec{\thv}_*}$.

For the purpose of clarity, let us exemplify the previous concepts. Consider a binary classification task where the labels are $Y=\{-1,1\}$. Here, one possible option to assign label $1(-1)$ is to measure globally all qubits in the computational basis and estimate the number of output bitstrings with even (odd) parity. That is, the probability of assigning label $y_i$ is given by the expectation value
\begin{equation}\label{eq:prob-label}
     p_i(y_i|\thv)=\Tr[\rho_i(\thv)O_{y_i}]\,,
\end{equation}
where $O_{1}=\sum_{\vec{z}:even}\dya{\vec{z}}$ and $O_{-1}=\sum_{\vec{z}:odd}\dya{\vec{z}}$. Alternatively, rather than computing the probability of assigning a given label, the measurement outcome can be a label prediction by itself. For instance, the latter can be achieved with the global measurement
 \begin{align}\label{eq:label-pred}
     \tilde{y}_i(\thv)=\Tr[\rho_i(\thv)Z^{\otimes n}]\,,
 \end{align}
as this is a number in $[-1,1]$.

Here, let us make several important remarks. First, we note that Eq.~\eqref{eq:prob-label} and~\eqref{eq:label-pred} are precisely of the form in Eq.~\eqref{eq:measurement}, where one computes the expectation value of an operator over the output state of the QNN. Then, we remark that both approaches are equivalent due to the fact that $O_{1}-O_{-1}=Z^{\otimes n}$, and thus
\begin{equation}
    \tilde{y}_i(\thv)= p_i(1|\thv)-p_i(-1|\thv)\,,
\end{equation}
and conversely
\begin{equation}\label{eq:prob-parity}
    p_i(y_i|\thv)=\frac{1+y_i\tilde{y}_i(\thv)}{2}\,.
\end{equation}
Finally, for the purpose of accuracy testing one needs to map from the continuous set of outcomes of $\tilde{y}_i(\thv)$ or $p_i(y_i|\thv)$ to the discrete set $Y$. Thus,  the QML model $\tilde{h}_{\thv}$  assigns label  $y_i$ if $p_i(y_i|\thv)\geq  p_i(-y_i|\thv)$.

Here it is also worth recalling two widely used loss functions in the literature. First, the mean squared error loss function is given by
\begin{align}
    \LC_{mse}(\thv) = \frac{1}{N}\sum_{i=1}^{N}(\tilde{y}_i(\thv)-y_i)^2\,, \label{eq:loss-mse}
\end{align}
where $\tilde{y}_i(\thv)$ is the model-predicted label for the data point $\rho_i$ obtained through some expectation value as in Eq.~\eqref{eq:measurement} (see for instance the label of Eq.~\eqref{eq:label-pred}). Then, the negative log-likelihood loss function  is defined as
\begin{align}
    \LC_{log}(\thv) &=-\frac{1}{N} \log \left(\prod_{i=1}^{N} p_i(y_i|\thv)\right)\nonumber \\ 
    &= - \frac{1}{N} \sum_{i = 1}^{N} \log (p_i(y_i|\thv))\,, \label{eq:loss-log}
\end{align}
where $p_i(y_i|\thv)$ is the probability of assigning label $y_i$ to the data point $\rho_i$ obtained through some expectation value as in Eq.~\eqref{eq:measurement} (see for instance the probability of Eq.~\eqref{eq:prob-label}). Moreover, we recall that the expectation of the negative log-likelihood Hessian is given by the Fisher Information (FI) matrix. In practice, one can estimate the FI matrix via the empirical FI matrix
\begin{align}\label{eq:FI_empirical}
    \tilde{F}(\vec \theta) = \frac{1}{N} \sum_{i=1}^{N} \frac{\partial}{\partial \vec \theta} \log p_i(y_i|\thv) \frac{\partial}{\partial \vec \theta} \log p_i(y_i|\thv)^T.
\end{align}

The FI matrix plays a crucial role in natural gradient optimization methods~\cite{amari1998natural}. Here, the FI measures the sensitivity of the output probability distributions to parameter updates. Hence, such optimization methods rely on the estimation of the FI matrix to optimize the QNN parameters.  Below we discuss how such optimization methods are affected when the QML model exhibits a vanishing gradient.

\subsection{Quantum Landscape Theory and Barren Plateaus}

Recently, there has been a tremendous effort in developing the so-called Quantum Landscape Theory~\cite{arrasmith2021equivalence,larocca2021theory} to analyze  properties of quantum loss landscapes, how they emerge, and how they affect the parameter optimization process. Here, one of the main topics of interest is that of Barren Plateaus (BPs). When a loss function exhibits a BP, the gradients vanish exponentially with the number of qubits, and thus an exponentially large number of shots are needed to navigate through the flat landscape. 

Since BPs have been mainly studied in the context of Variational Quantum Algorithms (VQAs)~\cite{cerezo2020variationalreview,mcclean2018barren,cerezo2020cost,larocca2021diagnosing,marrero2020entanglement,patti2020entanglement,holmes2021connecting,holmes2021barren,huembeli2021characterizing,zhao2021analyzing,wang2020noise}, we here briefly recall that in a VQA implementation, the goal is to minimize a linear cost function that is usually of the form 
\begin{equation}\label{eq:cost-function}
    C(\thv)=\Tr[U(\thv)\rho U\ad(\thv) H]\,.
\end{equation}
Here $\rho$ is the initial state, $U(\thv)$ a trainable parametrized quantum circuit, and $H$ a Hermitian operator. BPs for cost functions such as that in Eq.~\eqref{eq:cost-function} have been shown to arise for global operators $H$~\cite{cerezo2020cost}, as well as due to certain properties of $U(\thv)$ such as its structure and depth~\cite{mcclean2018barren,larocca2021diagnosing,wang2020noise}, its expressibility~\cite{holmes2021connecting,holmes2021barren}, and its entangling power~\cite{sharma2020trainability,marrero2020entanglement,patti2020entanglement}. 

There are several ways in which  the BP phenomenon can be characterized. The most common is through  the scaling of the variance of cost function partial derivatives, as in a BP one has 
\begin{equation}\label{eq:var-BP}
    \Var[\partial_\nu C(\thv)] \in \OC(1/\alpha^n)\,,
\end{equation}
where $\partial_\nu C(\thv)=\partial C(\thv)/\partial \theta_\nu$, 
$\theta_\nu\in\thv$, and for  $\alpha > 1$. Here, the variance is taken with respect to the set of parameters $\thv$. We recall that here one also has that $\mathbb{E}[\partial_\nu C(\thv)]=0$, implying that gradients exponentially concentrate at zero. In addition, a  BP can also be characterized through the concentration of cost function values, so that $\Var[ C(\thv_A)-C(\thv_B)] \in \OC(1/\beta^n)$, where $\thv_A$ and $\thv_B$ are two randomly sampled sets of parameters, and $\beta>1$~\cite{arrasmith2021equivalence}. Finally, the presence of a BP can be diagnosed by inspecting the eigenvalues of the Hessian and the FI matrix, as these become exponentially vanishing with the system size~\cite{huembeli2021characterizing,abbas2020power} (see also Section~\ref{sec:fi-matrix} below). 

Despite tremendous progress in understanding BPs for VQAs, there are only a few results which analyze the existence of BPs in QML settings~\cite{pesah2020absence,sharma2020trainability,liu2021presence,abbas2020power,haug2021optimal,kieferova2021quantum,kiani2021quantum}. In fact, while in the VQA community there is a consensus that certain properties of an algorithm should be avoided (such as global measurements), these same properties are still widely used in the QML literature (e.g., global parity measurements as those in Eqs.~\eqref{eq:prob-label} and~\eqref{eq:label-pred}).   Thus, bridging the VQA and QML communities would allow the use of trainability and gradient scaling results across both fields.

\subsection{VQAs versus QML}

Let us first recall that, as previously mentioned, training a VQA or the QNN in a QML model usually implies optimizing the parameters in a quantum circuit. However, despite this similarity, there are some key differences between the VQA and a QML framework, which we summarize in Table~\ref{table:VQAvsQML}. First,  QML is based on training a QNN using data from dataset. Moreover, the input quantum stats to the QML model contain information that the model is trying to learn. On the other hand, in VQAs there is usually no training data, but rather a single input state sent trough a parametrized quantum circuit. Here, such initial state is generally an easy-to-prepare state such as $\ket{\vec{0}}$.

Then, while VQAs generally deal with quantum states, QML models are envisioned to work both on classical and quantum data. Thus, when using a QML model with classical data, there is an additional encoding step that depends on the choice of embedding channel $\EC_{\vec{x}_i}^E$. When the QML model deals with quantum data, the input states are usually non-trivial states.

Finally, we note that for most VQAs, the cost function of Eq.~\eqref{eq:cost-function} is a linear function in the matrix elements of $U(\thv)\rho U\ad(\thv)$. For QML, however, the loss function of Eq.~\eqref{eq:generic-loss-function} can be a more complex non-linear function of the matrix elements of $\rho_i(\thv)$ (see for instance the log-likelihood in Eq.~\eqref{eq:loss-log}). Here, it is still worth noting that the expectation values $\ell_i(\thv;y_i)$ of Eq.~\eqref{eq:measurement} are exactly of the form of VQA cost functions, as both of these are  linear functions of the parametrized output states.

\begin{table}[t]
\centering
\begin{tabular}{||c | c | c||}
    \hline
      & VQA  & QML\\
    \hline \hline
    Input & Single (trivial)  & Dataset of   \\
            & input state     &  non-trivial states\\
    \hline
    Embedding & Not required  & Required for \\
     & &  classical data \\
    \hline
    Loss function & Linear & Non-linear\\
    \hline
\end{tabular}
\caption{\textbf{Differences between VQAs and QML}. This table summarizes the key differences between the VQA and QML settings, including the input that the algorithm takes, whether an embedding scheme is needed, and the form of the loss (or cost) function. }
\label{table:VQAvsQML}
\end{table}

\section{Analytical Results}\label{section:analytic-results}

The differences between the VQA and QML frameworks  discussed in the previous section mean that known gradient scaling results for VQAs do not automatically or trivially extend to the QML setting. Nevertheless, the goal of this section is to establish and formalize the connection between these two frameworks, as summarized in Fig.~\ref{fig:summary-results}. Namely, in Section~\ref{sec:connecting-loss-cost} we link the variance of the partial derivative of QML loss functions of the form of Eq.~\eqref{eq:generic-loss-function} to that of partial derivatives of linear expectation values, which are the quantities of study for VQAs (see Eq.~\eqref{eq:cost-function}). This allows us to show that, under standard assumptions, the landscapes of QML loss functions such as the mean-squared error loss in Eq.~\eqref{eq:loss-mse} and the negative log-likelihood loss in Eq.~\eqref{eq:loss-log} have BPs in all the settings that standard VQAs do. In addition, as we discuss in Section~\ref{sec:embedding-induced-BP}, additional care should be taken to guarantee the trainability of a QML model, as here one has to deal with datasets and embeddings. This leads us to introduce a new mechanism for untrainability arising from the dataset itself. Finally, in Section~\ref{sec:fi-matrix} we demonstrate that under conditions which induce a BP for the negative log-likelihood loss function, the matrix elements of the empirical Fisher Information matrix also exponentially vanish.

\begin{figure}[t!]
	\includegraphics[width=\columnwidth]{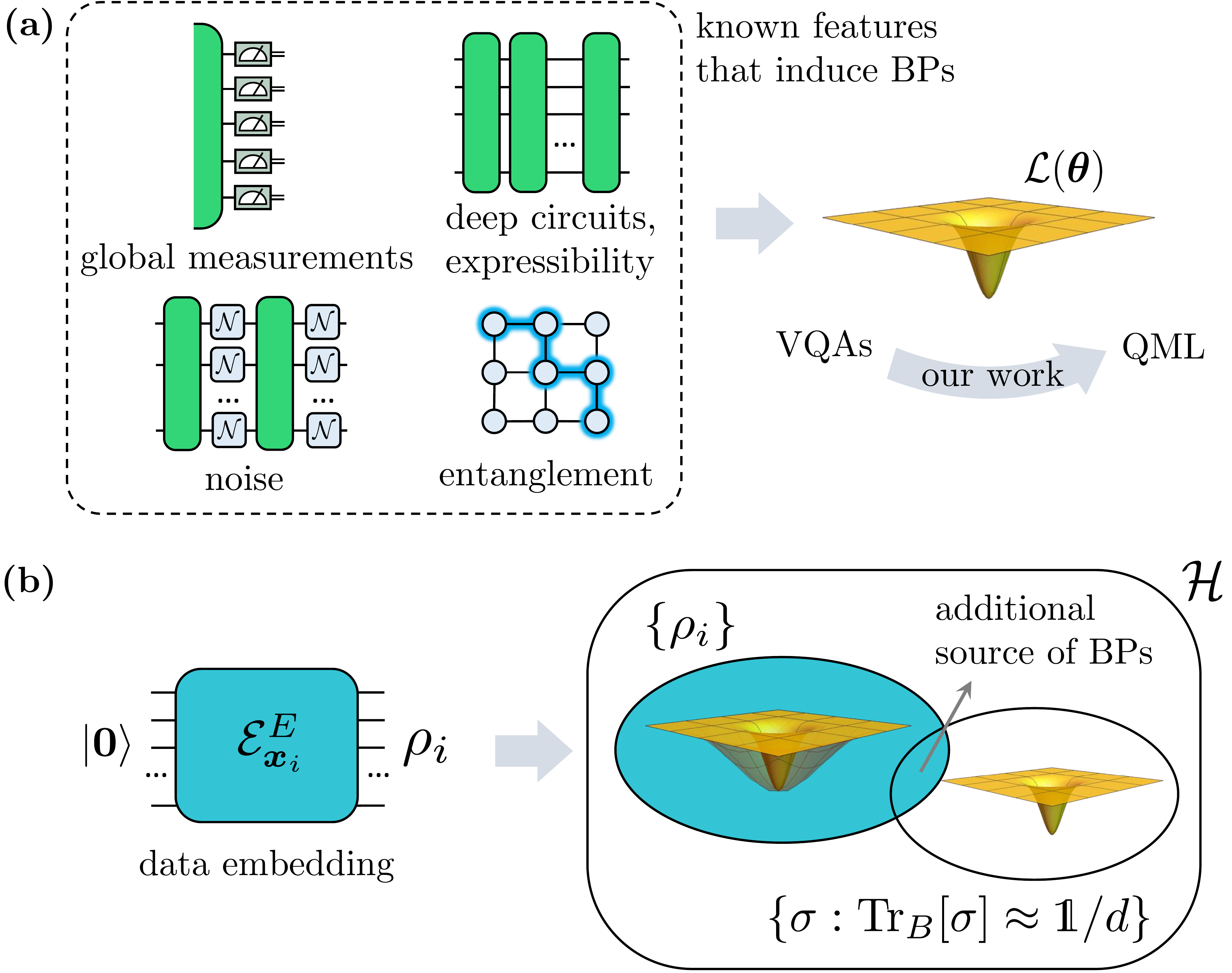}
	\caption{\textbf{Summary of results.} (a) In this work we bridge the gap between trainability results for the VQA and QML settings. Namely, we show that the considered class of supervised QML models will exhibit a barren plateau (BP) in all settings that standard VQAs do. This means that features such as global measurements, deep circuits, or highly entangling circuits should be avoided in QML. (b) We present analytical and numerical evidence that aspects of the dataset can be an additional source of BPs. For instance,  embedding schemes for classical data can lead to states that have highly mixed reduced states and thus display trainability issues. This is a novel source for BPs which we call dataset-induced BPs.}
	\label{fig:summary-results}
\end{figure}

\subsection{Connecting the gradient scaling of QML loss functions and linear cost functions}\label{sec:connecting-loss-cost}

In the following theorem we study the variance of the partial derivative for the generic loss function defined in Eq.~\eqref{eq:generic-loss-function}. 
We upper bound this quantity by making a connection to the variance of partial derivatives of linear expectation values of the form of Eq.~\eqref{eq:measurement}. As shown in Section \ref{sec:appdx-theorem} of the Appendix, the following theorem holds.

\begin{theorem}[Variance of partial derivative of generic loss function]\label{thm:var-generic-loss}
Consider the partial derivative of the loss function $\LC(\thv)$ in Eq.~\eqref{eq:generic-loss-function} taken with respect to variational parameter $\theta_\nu \in\thv$. We denote this quantity as $\partial_\nu \LC(\thv) \equiv \partial \LC(\thv)/\partial\theta_\nu$. The following inequality holds \small
\begin{align} \label{eq:supp-var-generic-loss}
    \Var[\partial_\nu \LC(\thv)] \leq \left(\frac{1}{N} \sum_i {g_i} \sqrt{ \Var[\partial_\nu \ell_i(\thv)]+ (\mathbb{E}[\partial_\nu \ell_i(\thv)])^2} \right)^2,
\end{align}
\normalsize
where we used $\ell_i(\thv)$ as a shorthand notation for $\ell_i(\thv;y_i)$ in Eq.~\eqref{eq:measurement}, and where the expectation values are taken over the parameters $\thv$. Here we defined
\begin{align}
    g_i = \sqrt{2}\max_{\ell_i}\left|\frac{\partial f}{\partial \ell_i}\right|\,, \label{eq:g_i}
\end{align}
where $\frac{\partial f}{\partial \ell_i}$ is the $i$-th entry of the Jacobian $J_{\vec{\ell}}$ with $\vec{\ell}=(\ell_1,\ldots,\ell_N)$, and where we denote as $\max_{\ell_i}\left|\frac{\partial f}{\partial \ell_i}\right|$ the maximum value of the partial derivative of $f(\ell_i,y_i)$.
\end{theorem}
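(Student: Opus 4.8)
The plan is to peel apart the per-sample structure of $\LC(\thv)$ so that its gradient variance is controlled entirely by the means and variances of the linear gradients $\partial_\nu \ell_i(\thv)$, which are precisely the objects appearing in VQA barren-plateau analyses [note that Eq.~\eqref{eq:measurement} has the same form as the VQA cost in Eq.~\eqref{eq:cost-function}]. The first step is the chain rule: since $\LC(\thv)$ depends on $\thv$ only through the vector $\vec\ell=(\ell_1,\dots,\ell_N)$, differentiating the empirical loss gives
\begin{equation}
    \partial_\nu \LC(\thv) = \frac{1}{N}\sum_{i=1}^{N} \frac{\partial f}{\partial \ell_i}\,\partial_\nu \ell_i(\thv)\,,
\end{equation}
where the prefactor $1/N$ reflects the averaging over the $N$ data points (cf.\ Eqs.~\eqref{eq:loss-mse} and~\eqref{eq:loss-log}) and $\partial f/\partial\ell_i$ is the $i$-th entry of the Jacobian $J_{\vec\ell}$. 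This writes $\partial_\nu\LC$ as a sum of products $A_i B_i$, with $A_i=\partial f/\partial\ell_i$ and $B_i=\partial_\nu\ell_i(\thv)$, both of which carry $\thv$-dependence.

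The second step reduces the variance of this sum to a sum of single-term contributions. Because the standard deviation $\sqrt{\Var[\,\cdot\,]}$ is a seminorm on centered random variables, Minkowski's (triangle) inequality gives
\begin{equation}
    \sqrt{\Var[\partial_\nu \LC(\thv)]} \leq \frac{1}{N}\sum_{i=1}^{N} \sqrt{\Var[A_i B_i]}\,.
\end{equation}
Equivalently one may expand $\Var[\sum_i A_iB_i]=\sum_{i,j}\Cov[A_iB_i,A_jB_j]$ and bound each covariance by Cauchy--Schwarz; either route isolates the problem into controlling $\Var[A_iB_i]$ for a single data point.

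The crux is the third step, bounding $\Var[A_iB_i]$ where both factors fluctuate with $\thv$. Here I would use that the Jacobian entry is uniformly bounded, $|A_i|\leq \max_{\ell_i}|\partial f/\partial\ell_i|=:M_i$ (this is where the hypothesis that $f$ is first-order differentiable, with derivative bounded on the accessible range of $\ell_i$, is used to decouple the $\thv$-dependence of $A_i$). Dropping the nonnegative term $(\mathbb{E}[A_iB_i])^2$ and expanding $B_i$ about its mean as $B_i=\mathbb{E}[B_i]+(B_i-\mathbb{E}[B_i])$, one obtains
\begin{equation}
    \sqrt{\Var[A_iB_i]} \leq M_i\left(|\mathbb{E}[\partial_\nu\ell_i(\thv)]| + \sqrt{\Var[\partial_\nu\ell_i(\thv)]}\,\right)\,.
\end{equation}
Finally, applying the elementary inequality $a+b\leq\sqrt{2}\sqrt{a^2+b^2}$ with $a=|\mathbb{E}[\partial_\nu\ell_i]|$ and $b=\sqrt{\Var[\partial_\nu\ell_i]}$ merges the mean and the deviation into the single square root appearing in the claim, and produces the constant $g_i=\sqrt{2}\,M_i$ of Eq.~\eqref{eq:g_i}. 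Substituting back and squaring yields Eq.~\eqref{eq:supp-var-generic-loss}.

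I expect the main obstacle to be this third step: unlike in the linear VQA setting, $A_i=\partial f/\partial\ell_i$ is itself a (generally nonlinear) function of $\thv$ through $\ell_i(\thv)$, so $A_i$ and $B_i$ are correlated and $\Var[A_iB_i]$ does not factorize. The uniform bound $|A_i|\leq M_i$ is the key device that severs this correlation cleanly, and the remaining care is to keep the mean $\mathbb{E}[\partial_\nu\ell_i]$ and the fluctuation $\sqrt{\Var[\partial_\nu\ell_i]}$ separate until the last inequality; in the typical VQA situation $\mathbb{E}[\partial_\nu\ell_i]=0$, whereupon the bound collapses to $\Var[\partial_\nu\LC]\leq(\tfrac{1}{N}\sum_i g_i\sqrt{\Var[\partial_\nu\ell_i]})^2$ and directly imports barren-plateau scalings from the cost-function analysis.
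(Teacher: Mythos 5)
Your proof is correct, and its overall skeleton is the same as the paper's: decompose the gradient into per-sample terms via the chain rule, reduce the variance of the sum to a sum of standard deviations, and sever the correlation between the Jacobian entry $A_i=\partial f/\partial\ell_i$ and the linear gradient $B_i=\partial_\nu\ell_i$ with the uniform bound $M_i=\max_{\ell_i}|\partial f/\partial\ell_i|$. Your second step (Minkowski for the standard-deviation seminorm) is exactly the paper's Supplemental Lemma~\ref{lem:var-sum}, which is proved by the covariance-expansion-plus-Cauchy--Schwarz route you mention as the alternative; the two are interchangeable, and the paper's choice to apply it before rather than after the chain rule is immaterial. The genuine divergence is in the crux third step. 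The paper invokes a dedicated variance-of-product lemma (Supplemental Lemma~\ref{lem:var-product}), $\Var[XY]\leq 2\Var[X]|Y^2|_{max}+2(\mathbb{E}[X])^2\Var[Y]$, built on $\Var[X+Y]\leq 2\Var[X]+2\Var[Y]$, and then bounds $\Var[Y]\leq|Y^2|_{max}$ to obtain $\Var[A_iB_i]\leq 2M_i^2\left(\Var[\partial_\nu\ell_i]+(\mathbb{E}[\partial_\nu\ell_i])^2\right)$. You instead use the direct second-moment bound: dropping $(\mathbb{E}[A_iB_i])^2$ gives $\Var[A_iB_i]\leq\mathbb{E}[A_i^2B_i^2]\leq M_i^2\,\mathbb{E}[B_i^2]=M_i^2\left(\Var[\partial_\nu\ell_i]+(\mathbb{E}[\partial_\nu\ell_i])^2\right)$, which is more elementary and a factor of $2$ tighter---it actually proves the theorem with $g_i=M_i$, showing the $\sqrt{2}$ in Eq.~\eqref{eq:g_i} is an artifact of the paper's lemma route. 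Your subsequent detour, splitting $\sqrt{\mathbb{E}[B_i^2]}\leq|\mathbb{E}[B_i]|+\sqrt{\Var[B_i]}$ and recombining via $a+b\leq\sqrt{2}\sqrt{a^2+b^2}$, is valid but unnecessary; it serves only to reinsert the $\sqrt{2}$ so your constant matches the stated $g_i$. What the paper's more roundabout route buys is modularity: its product lemma does not require either factor to be uniformly bounded (only uses $|Y^2|_{max}$ on one of them, keeping the $\mathbb{E}[X]$ dependence explicit), and it is reused verbatim in the appendix proof of Corollary~\ref{corollary:cor2} on the empirical Fisher information matrix, where both factors of the product genuinely fluctuate. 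For the present theorem, your leaner argument is preferable.
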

Theorem~\ref{thm:var-generic-loss} establishes a formal relationship between gradients of generic loss functions and linear expectation values, which includes linear cost functions in VQA frameworks. This result provides a bridge to bound the variance of partial derivatives of QML loss functions based on known behavior of linear cost functions. As shown next, this allows us to use gradient scaling results from the VQA literature in QML settings.

The direct implication of Theorem \ref{thm:var-generic-loss} can be understood as follows. Consider the case when $g_i$ is at most polynomially increasing in $n$ for all $\rho_i$ in the training set. Then, if a facet of the model causes linear cost functions to exhibit BPs, then $\LC(\thv)$ will  also suffer from BPs. That is, if $\Var[\partial_\nu \ell_i(\thv)]$ is exponentially vanishing, then so will be $\Var[\partial_\nu \LC(\thv)]$.  Let us now explicitly demonstrate the implications of Theorem \ref{thm:var-generic-loss} for the mean squared error and the negative log-likelihood cost functions.

\begin{corollary}[Barren plateaus in mean squared error and negative log-likelihood loss functions]\label{corollary:cor1}
Consider the mean squared error loss function $\LC_{mse}$ defined in Eq.~\eqref{eq:loss-mse} and the negative log-likelihood loss $\LC_{log}$ defined in Eq.~\eqref{eq:loss-log}, with respective model-predicted labels $\tilde{y}_i(\thv)$ and model-predicted probabilities $p_i(y_i|\thv)$ both of the form of Eq.~\eqref{eq:measurement}. Suppose that the QNN has a BP for the linear expectation values, i.e., Eq.~\eqref{eq:var-BP} is satisfied for all $\tilde{y}_i(\thv)$ and $p_i(y_i|\thv)$. Then, assuming that $\tilde{y}_i(\thv) \in \OC(\poly(n))\; \forall i$ we have 
\begin{align}
    \Var[\partial_\nu \LC_{mse}] \in \mathcal{O}(1/\alpha^{n})\, ,
\end{align}
for  $\alpha > 1$. Similarly, assuming that $p_i(y_i|\thv) \in [b,1]\; \forall i,\thv$, where $b\in \Omega(1/\poly(n))$, we have
\begin{align}
    \Var[\partial_\nu \LC_{log}] \in \mathcal{O}(1/\alpha^{n})\, ,
\end{align}
for  $\alpha > 1$. 
\end{corollary}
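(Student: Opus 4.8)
The plan is to apply Theorem~\ref{thm:var-generic-loss} directly to each of the two loss functions, so that essentially all of the work reduces to identifying the function $f$ associated with each loss and bounding the corresponding prefactor $g_i$ defined in Eq.~\eqref{eq:g_i}. Once $g_i$ is shown to be at most polynomial in $n$, the exponential smallness of $\Var[\partial_\nu \ell_i(\thv)]$ guaranteed by the barren plateau assumption propagates through the bound of Eq.~\eqref{eq:supp-var-generic-loss} to yield an exponentially vanishing $\Var[\partial_\nu \LC]$. Throughout, I would use the fact recalled below Eq.~\eqref{eq:var-BP} that a barren plateau also implies $\mathbb{E}[\partial_\nu \ell_i(\thv)] = 0$, so that the quantity under the square root in Eq.~\eqref{eq:supp-var-generic-loss} is simply $\Var[\partial_\nu \ell_i(\thv)] \in \OC(1/\alpha^n)$ and hence its square root lies in $\OC(1/\alpha^{n/2})$.

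For the mean squared error loss in Eq.~\eqref{eq:loss-mse}, I would first read off $f(\ell_i,y_i) = \frac{1}{N}(\ell_i - y_i)^2$ with $\ell_i = \tilde{y}_i(\thv)$, giving $\partial f/\partial \ell_i = \frac{2}{N}(\tilde{y}_i(\thv) - y_i)$ and therefore $g_i = \frac{2\sqrt{2}}{N}\max_{\ell_i}|\tilde{y}_i(\thv) - y_i|$. Since the labels $y_i$ are bounded and the hypothesis $\tilde{y}_i(\thv) \in \OC(\poly(n))$ controls the model output, $g_i$ is at most polynomial in $n$. Substituting into Eq.~\eqref{eq:supp-var-generic-loss}, each summand $g_i\sqrt{\Var[\partial_\nu \ell_i(\thv)]}$ lies in $\OC(\poly(n)/\alpha^{n/2})$; the $1/N$-weighted average preserves this scaling, and squaring yields $\Var[\partial_\nu \LC_{mse}] \in \OC(\poly(n)/\alpha^n)$.

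For the negative log-likelihood loss in Eq.~\eqref{eq:loss-log}, I would identify $f(\ell_i,y_i) = -\frac{1}{N}\log(\ell_i)$ with $\ell_i = p_i(y_i|\thv)$, so that $\partial f/\partial \ell_i = -1/(N\, p_i(y_i|\thv))$ and $g_i = \frac{\sqrt{2}}{N}\max_{\ell_i} 1/p_i(y_i|\thv)$. The crucial and only nontrivial point here is that this prefactor can be controlled precisely because of the hypothesis $p_i(y_i|\thv) \in [b,1]$ with $b \in \Omega(1/\poly(n))$: this lower bound gives $1/p_i(y_i|\thv) \leq 1/b \in \OC(\poly(n))$, so again $g_i \in \OC(\poly(n))$. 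The remainder of the argument is identical to the mean squared error case, yielding $\Var[\partial_\nu \LC_{log}] \in \OC(\poly(n)/\alpha^n)$.

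The final step in both cases is to absorb the polynomial prefactor into the exponential: since $\poly(n)/\alpha^n \in \OC(1/\beta^n)$ for any $\beta$ with $1 < \beta < \alpha$, the conclusion holds for such a $\beta > 1$, matching the statement of the corollary. I expect the main subtlety to flag to be the log-likelihood case, where $\partial f/\partial \ell_i$ diverges as $p_i(y_i|\thv) \to 0$; it is precisely the boundedness assumption $b \in \Omega(1/\poly(n))$ that does the essential work of keeping $g_i$ polynomially bounded, and the corollary would fail without it.
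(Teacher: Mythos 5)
Your proposal is correct and follows essentially the same route as the paper's proof: apply Theorem~\ref{thm:var-generic-loss} to each loss, bound $g_i$ via $\left|\partial f/\partial \ell_i\right|_{max}$ (using the boundedness of $\tilde{y}_i(\thv)$ for the mean squared error and the clipping $p_i(y_i|\thv)\geq b$ with $b\in\Omega(1/\poly(n))$ for the log-likelihood, which is indeed the one load-bearing assumption), and let the barren plateau condition carry the exponential decay through the bound. The only cosmetic differences are bookkeeping of the $1/N$ factor (you fold it into $f$, whereas the appendix proof writes $\LC=\frac{1}{N}\sum_i f$ and keeps $1/N$ outside, a discrepancy inherited from the paper's own notation) and your explicit use of $\mathbb{E}[\partial_\nu \ell_i(\thv)]=0$, which the paper keeps as the unexpanded term $(\mathbb{E}[\partial_\nu \ell_i])^2$; neither affects the conclusion.
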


We note the assumption on $\tilde{y}_i(\thv)$ in Corollary~\ref{corollary:cor1} can be made to be generically satisfied as label values are usually bounded by a constant, and if not they can always be normalized by classical post-processing. The assumption on the possible values of $p_i(y_i|\thv)$ is equivalent to clipping model-predicted probabilities such that they are strictly greater than zero, which is a common practice when using the negative log-likelihood cost function~\cite{pedregosa2011scikit, abadi2016tensorflow}.

Corollary~\ref{corollary:cor1} explicitly implies that, under mild assumptions, previously established BP results for VQAs are applicable to QML models that utilize mean square error and negative log-likelihood loss functions. Consequently, existing QML proposals with these BP-induced features, such those as summarized in Fig.~\ref{fig:summary-results}, need to be revised in order to avoid an exponentially flat loss function landscape.

We note that, as shown in the Appendix~\ref{sec:appdx-cor-extensions}, the above results in Corollary~\ref{corollary:cor1} can also be extended to the generalized mean square error loss function used in multivariate regression tasks and to loss functions based on the Kullback–Leibler divergence used in generative modelling.

\subsection{Dataset-induced barren plateaus}\label{sec:embedding-induced-BP}

In this section we argue that the dataset can negatively impact QML trainability if the input states to the QNN have high levels of entanglement and if the QNN employs local gates (which is the standard assumption in most widely used QNNs~\cite{farhi2018classification,cong2019quantum,beer2020training,bausch2020recurrent}). This is due to the fact that reduced states of highly entangled states can be very close to being maximally mixed, and it is hard to train local gates acting on such states. Note that this phenomenon is not expected to arise in a standard VQA setting where the input state is considered to be a trivial tensor-product state.

To illustrate this issue, we will present an example where a VQA does not exhibit a BP, but a QML model can still have a dataset-induced BP. Consider a model where the QNN is given by a single layer of the so-called alternating layered ansatz~\cite{cerezo2020cost}. Here, $\EC_{\thv}^{QNN}$ is a circuit composed a tensor product of $\xi$ $s$-qubit unitaries  such that $ n = s \xi$. That is, the QNN is of the form $U(\thv)=\bigotimes_{k=1}^{\xi} U_k(\thv_k)$,  and where $U_k(\thv_k)$ is a general unitary acting on $s$ qubits.   Furthermore, consider a linear loss function constructed from local expectation values of the form
\begin{align}
   \ell_i(\thv) = 1 - \frac{1}{n} \sum_{j=1}^n {\rm Tr} \left[ \left(\ket{0}\bra{0}_j \otimes \id_{\bar j}\right)  \rho_i(\thv)  \right]\,, \label{eq:cost-local}
\end{align}
where $\dya{0}_j$ denotes the projector on the $j$-th qubit, and where $\rho_i(\thv)$ is defined in Eq.~\eqref{eq:qnn-output}. 
We now quote the following result from Ref.~\cite{cerezo2020cost} on the variance of the partial derivative of quantities of the form in Eq.~\eqref{eq:cost-local}. 

\begin{proposition}[From Supplementary Note 5 of Ref.~\cite{cerezo2020cost}]\label{prop:local-cost}
Suppose that $\EC_{\thv}^{QNN}$ is given by an application of a single layer of the alternating layered ansatz consisting of a tensor product of $s$-qubit unitaries.  Consider the partial derivative of the local cost in Eq.~\eqref{eq:cost-local} taken with respect to a parameter $\theta_\nu$ appearing in the $h$-th unitary. If each unitary forms a local 2-design on $s$ qubits, we have
\begin{align}
    \Var[\partial_\nu \ell_i(\thv)] = r_{n,s} D_{HS} \Big(\rho_i^{(h)}, \frac{\emph{\id}}{2^s}\Big) \,,
\end{align}
where $r_{n,s} \in \Omega(1/\poly (n))$, and where we denote $D_{HS} (A,B) = \Tr[(A-B)^2]$  as the Hilbert-Schmidt distance. Here, $\frac{\emph{\id}}{2^s}$ is the the maximally mixed state on $s$ qubits and  $\rho_i^{(h)}=\Tr_{\overline h}[\rho_i]$ is the reduced input state on the $s$ qubits acted upon by the $h$-th unitary.
\end{proposition}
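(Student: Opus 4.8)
The plan is to reduce the $n$-qubit variance to a Haar average over the single $s$-qubit block $h$, evaluate that average with the standard second-moment twirl, and then read off the prefactor; because the block size $s$ is held constant, the result will carry an $\Omega(1/\poly(n))$ factor. Since the QNN unitary $U(\thv)=\bigotimes_k U_k(\thv_k)$ acts block-wise and each projector $\dya{0}_j$ is supported on a single qubit, the term $\Tr[(\dya{0}_j\ot\id)\rho_i(\thv)]$ depends only on the block containing qubit $j$; tracing out the remaining blocks it equals $\Tr[(\dya{0}_j\ot\id)\,U_h\rho_i^{(h)}U_h\ad]$, with $\rho_i^{(h)}=\Tr_{\bar h}[\rho_i]$. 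Differentiating with respect to $\theta_\nu\in U_h$ therefore annihilates every term whose qubit lies outside block $h$, so that
\[
\partial_\nu\ell_i(\thv)=-\frac{1}{n}\,\partial_\nu\Tr\!\big[O_h\,U_h\rho_i^{(h)}U_h\ad\big],\qquad O_h=\sum_{j\in\mathrm{block}\,h}\dya{0}_j\ot\id ,
\]
which confines the whole calculation to block $h$ and its reduced input $\rho_i^{(h)}$.

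Next I would isolate the parametrized gate. Writing $U_h=W_R\,e^{-i\theta_\nu\sigma_\nu/2}\,W_L$ with traceless (Pauli) generator $\sigma_\nu$, the usual commutator identity gives $\partial_\nu\Tr[O_h U_h\rho_i^{(h)}U_h\ad]\propto\Tr\big[[\tilde O,\sigma_\nu]\,\tilde\rho\big]$, where $\tilde O$ is $O_h$ conjugated by $W_R$ (and the gate) and $\tilde\rho=W_L\rho_i^{(h)}W_L\ad$. The variance then splits as $\Var[\partial_\nu\ell_i]=\mathbb{E}[(\partial_\nu\ell_i)^2]-(\mathbb{E}[\partial_\nu\ell_i])^2$, and under the local $2$-design hypothesis (strictly, that the sub-circuits $W_L$ and $W_R$ flanking the gate each average as a $2$-design on $s$ qubits) these averages become Haar integrals over $W_L$ and $W_R$.

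First I would check that the mean vanishes: averaging $W_R$ with the first Haar moment replaces $\tilde O$ by $\tfrac{\Tr[O_h]}{2^s}\id$, and $[\id,\sigma_\nu]=0$, so $\mathbb{E}[\partial_\nu\ell_i]=0$ and $\Var[\partial_\nu\ell_i]=\mathbb{E}[(\partial_\nu\ell_i)^2]$. For the second moment I would square the commutator expression and integrate $W_L$ and $W_R$ independently using the $2$-design twirl
\[
\int dW\,W^{\ot2}\,M\,(W\ad)^{\ot2}=\frac{\Tr[M]-\tfrac{1}{d}\Tr[\mathbb{S}M]}{d^2-1}\,\id^{\ot2}+\frac{\Tr[\mathbb{S}M]-\tfrac{1}{d}\Tr[M]}{d^2-1}\,\mathbb{S},
\]
with $d=2^s$ and $\mathbb{S}$ the swap on two copies. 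This expresses $\mathbb{E}[(\partial_\nu\ell_i)^2]$ in terms of $\Tr[O_h]$, $\Tr[O_h^2]$, $\Tr[\sigma_\nu^2]$, $\Tr[\rho_i^{(h)}]=1$, and $\Tr[(\rho_i^{(h)})^2]$; after collecting terms the only surviving dependence on the input is through the purity, and using $\Tr[(\rho_i^{(h)})^2]-1/2^s=D_{HS}(\rho_i^{(h)},\id/2^s)$ the answer factorizes exactly as $r_{n,s}\,D_{HS}(\rho_i^{(h)},\id/2^s)$.

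Finally I would extract the prefactor. Since $s$ is a fixed constant, the Weingarten coefficients (rational functions of $d=2^s$) and the block traces $\Tr[O_h]=s\,2^{s-1}$ and $\Tr[O_h^2]$ are all $\Theta(1)$, and the only $n$-dependence is the overall $1/n^2$ from squaring the $1/n$ prefactor in Eq.~\eqref{eq:cost-local}. Thus $r_{n,s}=\Theta(c(s)/n^2)$ for a constant $c(s)>0$, giving $r_{n,s}\in\Omega(1/\poly(n))$ as claimed. I expect the main obstacle to be the second-moment evaluation itself: one must apply the twirl to $W_L$ and $W_R$ consistently, track the cross terms generated by squaring the commutator, and verify that every $\sigma_\nu$- and $O_h$-dependent trace collapses so that the input enters only through $\Tr[(\rho_i^{(h)})^2]$; the difficulty is combinatorial bookkeeping rather than any single hard estimate.
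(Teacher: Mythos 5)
Your proposal is correct and takes essentially the same route as the proof in the source the paper cites for this proposition (Supplementary Note 5 of Ref.~\cite{cerezo2020cost}) --- the paper itself gives no proof --- namely: restrict the derivative to block $h$ (all terms with $j$ outside the block are $\theta_\nu$-independent), write it in commutator form, twirl the flanking sub-circuits with the second-moment formula, and observe that the input enters only through its purity, so that the vanishing of the gradient at $\rho_i^{(h)}=\id/2^s$ forces the factorization $\Var[\partial_\nu \ell_i(\thv)]=r_{n,s}\,D_{HS}\big(\rho_i^{(h)},\id/2^s\big)$ with $r_{n,s}=\Theta\big(c(s)/n^2\big)\in\Omega(1/\poly(n))$ for constant $s$. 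Your flagged caveat --- that strictly one needs the sub-circuits $W_L$ and $W_R$ flanking the parametrized gate to each average as $2$-designs, not merely the whole block --- matches the assumption actually used in that reference's computation.
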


First, let us remark that Proposition~\ref{prop:local-cost} shows that a standard VQA that uses the cost in Eq.~\eqref{eq:cost-local} (tensor product ansatz and local measurement) does not exhibit a BP. This is due to the fact that when the input state to the VQA is $\ket{0}^{\otimes n}$, then $D_{HS} (\dya{0}^{\otimes s},\frac{{\id}}{2^s}) =1-1/2^s$, and hence $\Var[\partial_\nu \ell_i(\thv)]\in\Omega(1/\poly (n))$ (assuming $s$ does not scale with $n$). 

However, for a QML setting, Proposition \ref{prop:local-cost}, combined with Theorem \ref{thm:var-generic-loss}, implies that the QML  model is susceptible to dataset-induced BPs  even with simple QNNs and local measurements.  Specifically, if the reduced input state is exponentially close to the maximally mixed state then one has a BP. Examples of such datasets are Haar-random quantum states~\cite{brandao2010hastings} or classical data encoded with a scrambling unitary~\cite{holmes2021barren}, as the reduced states can be shown to concentrate around the maximally mixed state through Levy's lemma~\cite{ledoux2001concentration}. 

We note that this phenomenon is similar to entanglement-induced BPs~\cite{marrero2020entanglement,sharma2020trainability}. However, in a typical entanglement-induced BP setting, it is the QNN that generates high levels of  entanglement in its output states and thus leads to trainability issues. In contrast, here it is the quantum states obtained from the dataset that already contain large amounts of entanglement even before getting to the QNN.

It is important to make a distinction between the dataset-induced BPs for classical and quantum data. Specifically, special care must be taken when using classical datasets as here one can actually choose the embedding scheme, and this choice affects the amount of entanglement that the states $\rho_i$ will have. Such a choice is typically not present for quantum datasets. 

For classical datasets, let us make the important remark that (in a practical scenario) the embedding is not able to prevent a BP which would otherwise exist in the model.  As discussed in Section~\ref{sec:framework-QML}, the embedding process simply assigns a quantum state $\rho_i=\EC_{\vec{x}_i}^E(\dya{\vec{0}})$ to each classical data point $x_i$. Thus, previously established BP results that hold independently of the input state  such as global cost functions~\cite{cerezo2020cost},  deep circuits~\cite{mcclean2018barren}, expressibility~\cite{holmes2021connecting} and noise~\cite{wang2020noise} will hold regardless of the embedding strategy.

Our previous results further illuminate that the choice of embedding strategy is a crucial aspect of QML models with classical data, as it can affect that model's trainability. As argued in~\cite{havlivcek2019supervised}, a necessary condition for obtaining quantum advantage is that inner products of data-embedded states $\rho_i=\EC_{\vec{x}_i}^E(\dya{\vec{0}})$ should be classically hard to estimate. We note that of course this however is not sufficient to guarantee that the embedding is \emph{useful}. For instance, for a classification task it does not guarantee that states are embedded in distinguishable regions of the Hilbert space~\cite{lloyd2020quantum}. Thus, currently, one has the following criteria on which to design encoders for QML:
\begin{enumerate}
    \item Classically hard to simulate.
    \item Practical usefulness.
\end{enumerate}
From the results presented here,  a third criterion should be carefully considered when designing embedding schemes: \begin{enumerate}
\setcounter{enumi}{2}
    \item Not inducing trainability issues.
\end{enumerate}
This opens up a new research direction of trainability-aware quantum embedding schemes. 

Here we briefly note that while Proposition~\ref{prop:local-cost} was presented for a tensor product ansatz, a similar result can be obtained for more general QNNs such as the hardware efficient ansatz~\cite{cerezo2020cost} or the quantum convolutional neural network~\cite{pesah2020absence}. For these architectures it is found that $\Var[\partial_\nu \ell_i(\thv)]$ is upper bounded by a quantities such as $D_{HS}\Big(\rho_i^{(h)}, \frac{\emph{\id}}{2^s}\Big)$. However, while the form of the upper bound is more complex and cumbersome to report, the dataset-induced BP will arise for these architectures. 

\subsection{Empirical FI matrix in the presence of a barren plateau}\label{sec:fi-matrix}

Recently, the eigenspectrum of the empirical FI matrix was shown to be related to gradient magnitudes of the QNN loss function~\cite{abbas2020power}. Here we investigate this connection in more detail. Namely, we discuss how a BP in the QML model affects the empirical FI matrix and natural gradient-based optimizers (which employ the FI matrix).

In what follows, we show that under the conditions for which the negative log-likelihood loss gradients vanish exponentially, the matrix elements of the empirical FI matrix $\tilde{F}(\thv)$, as defined in Eq.~\eqref{eq:FI_empirical}, also vanish exponentially. This result is complementary to and extends the results in~\cite{abbas2020power}.  First, consider the following proposition.

\begin{proposition}\label{prop:FI}
Under the assumptions of Corollary \ref{corollary:cor1} for which the negative log-likelihood loss function has a BP according to Eq.~\eqref{eq:var-BP}, and assuming that the number of trainable  parameters in the QNN is in $ \OC(\poly(n))$, we have
\begin{equation}
    \mathbb{E}\big[\big|\tilde{F}_{\mu\nu}(\thv)\big|\big] \leq G(n) \,, \; \textrm{with}\;\, G(n) \in \OC(1/\alpha^n)\,,
\end{equation}
where $\tilde{F}_{\mu\nu}(\thv)$ are the matrix entries of $\tilde{F}(\thv)$ as defined in Eq.~\eqref{eq:FI_empirical}.
\end{proposition}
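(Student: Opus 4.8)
The plan is to expand the matrix element $\tilde{F}_{\mu\nu}(\thv)$ directly in terms of derivatives of the model-predicted probabilities $p_i(y_i|\thv)$ and then exploit the fact that, by Eq.~\eqref{eq:measurement}, these probabilities are linear expectation values of the output state, so that the barren-plateau scaling of Eq.~\eqref{eq:var-BP} applies to each $\partial_\nu p_i$. Using $\partial_\mu \log p_i = (\partial_\mu p_i)/p_i$, the definition in Eq.~\eqref{eq:FI_empirical} gives
\begin{equation}
    \tilde{F}_{\mu\nu}(\thv) = \frac{1}{N}\sum_{i=1}^{N}\frac{(\partial_\mu p_i)(\partial_\nu p_i)}{p_i^2}\,,
\end{equation}
so the problem reduces to controlling products of first derivatives of linear expectation values, weighted by $1/p_i^2$.

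First I would invoke the clipping assumption of Corollary~\ref{corollary:cor1}, namely $p_i(y_i|\thv)\in[b,1]$ with $b\in\Omega(1/\poly(n))$, to bound $1/p_i^2 \leq 1/b^2 \in \OC(\poly(n))$ uniformly in $\thv$. Taking absolute values and then the expectation over $\thv$, the triangle inequality yields
\begin{equation}
    \mathbb{E}\big[\big|\tilde{F}_{\mu\nu}(\thv)\big|\big] \leq \frac{1}{Nb^2}\sum_{i=1}^{N}\mathbb{E}\big[\,|\partial_\mu p_i|\,|\partial_\nu p_i|\,\big]\,.
\end{equation}
Next I would apply the Cauchy--Schwarz inequality to each term, $\mathbb{E}[|\partial_\mu p_i|\,|\partial_\nu p_i|] \leq \sqrt{\mathbb{E}[(\partial_\mu p_i)^2]\,\mathbb{E}[(\partial_\nu p_i)^2]}$, and rewrite each second moment as $\mathbb{E}[(\partial_\mu p_i)^2] = \Var[\partial_\mu p_i] + (\mathbb{E}[\partial_\mu p_i])^2$.

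The final step is to feed in the barren-plateau hypothesis: each $p_i(y_i|\thv)$ satisfies Eq.~\eqref{eq:var-BP}, so $\Var[\partial_\mu p_i]\in\OC(1/\alpha^n)$, and the gradient mean vanishes, $\mathbb{E}[\partial_\mu p_i]=0$ (as recalled in the discussion below Eq.~\eqref{eq:var-BP}), making both contributions to the second moment exponentially small. Hence every Cauchy--Schwarz term is in $\OC(1/\alpha^n)$, and after the sum over the $N$ data points cancels the $1/N$ prefactor we obtain $\mathbb{E}[|\tilde{F}_{\mu\nu}(\thv)|] \in \OC(\poly(n)/\alpha^n)$, where the polynomial factor absorbs $1/b^2$. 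Since $\poly(n)/\alpha^n \subseteq \OC(1/\tilde{\alpha}^n)$ for any $1<\tilde{\alpha}<\alpha$, this establishes the claimed $G(n)\in\OC(1/\alpha^n)$ bound (possibly at a slightly reduced base).

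The main subtlety, rather than a deep obstacle, is the $1/p_i^2$ weight: without the clipping assumption the log-derivative could diverge where $p_i\to 0$, so it is essential that $p_i\geq b$ with $b\in\Omega(1/\poly(n))$ keep this prefactor polynomially controlled. The hypothesis that the QNN has $\OC(\poly(n))$ parameters does not enter the per-entry estimate but guarantees that the conclusion is meaningful for the whole matrix, i.e.\ that each of the polynomially many entries is exponentially suppressed. The remaining manipulations are routine, since the zero-mean property of the gradients and the variance scaling are precisely what Eq.~\eqref{eq:var-BP} supplies.
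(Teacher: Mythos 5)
Your proof is correct, but it takes a genuinely different route from the paper's. The paper's proof stays inside the generic-loss machinery of Theorem~\ref{thm:var-generic-loss}: it bounds $\Var[\partial_\nu f(\ell_i,y_i)]$ and $\mathbb{E}[\partial_\nu f(\ell_i,y_i)]$ in terms of $g_i$ via covariance decompositions and Cauchy--Schwarz, obtains $\mathbb{E}[\tilde{F}_{\mu\nu}]\leq H(n)$ \emph{without} absolute values, and only then converts this to a bound on $\mathbb{E}\big[\big|\tilde{F}_{\mu\nu}\big|\big]$ by exploiting the pointwise positive semidefiniteness of $\tilde{F}(\thv)$: the diagonal bounds give $\Tr[\mathbb{E}[\tilde{F}]]\leq N_p H(n)$, eigenvalues and hence entries are controlled by the trace, and this is exactly where the $\OC(\poly(n))$ parameter-count hypothesis enters, yielding $G(n)=N_p H(n)$. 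You instead specialize immediately to the log-likelihood, write $\tilde{F}_{\mu\nu}=\frac{1}{N}\sum_i (\partial_\mu p_i)(\partial_\nu p_i)/p_i^2$, use the clipping assumption to control $1/p_i^2\leq 1/b^2\in\OC(\poly(n))$, and attack the absolute value head-on with the triangle inequality plus Cauchy--Schwarz on second moments. This is cleaner and in one respect stronger: your per-entry bound never invokes $N_p$, so the poly-parameter assumption is genuinely superfluous for the entrywise claim (as you correctly observe), whereas in the paper's argument it is load-bearing. Both arguments share one tacit ingredient beyond Eq.~\eqref{eq:var-BP}, which constrains only the variance: exponential smallness of $(\mathbb{E}[\partial_\nu p_i])^2$. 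You invoke the zero-mean property quoted below Eq.~\eqref{eq:var-BP} explicitly; the paper uses it silently when it drops the product-of-means term $\mathbb{E}[\partial_{\ell_i}f]\,\mathbb{E}[\partial_\nu \ell_i]$ in its bound on $\mathbb{E}[\partial_\nu f(\ell_i,y_i)]$, so you are on equal footing there. What the paper's longer route buys is reuse: its intermediate bounds on $\Var[\partial_\nu f]$ and $\mathbb{E}[\partial_\nu f]$ feed directly into the variance estimate underlying Corollary~\ref{corollary:cor2}, whereas your direct computation would need to be redone for that concentration result. Finally, your absorption of the $\poly(n)/b^2$ prefactor into a slightly smaller exponential base $\tilde{\alpha}<\alpha$ is the same (standard) abuse the paper commits implicitly when writing $G(n)\in\OC(1/\alpha^n)$.
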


While Proposition~\ref{prop:FI}  shows that the expectation value of the FI matrix elements vanish exponentially, this result is not enough to guarantee that (with high probability) the matrix elements will concentrate around their expected value. Next we present a stronger concentration result which is valid for QNNs where the parameter shift rule holds~\cite{mitarai2018quantum,schuld2019evaluating}.

\begin{corollary}\label{corollary:cor2}
Under the assumptions of Corollary \ref{corollary:cor1} for which the negative log-likelihood loss function has a BP according to Eq.~\eqref{eq:var-BP}, and assuming that the QNN structure allows for the use of the parameter shift rule, we have
\begin{equation}
    \emph{Pr}\left(\left|\tilde{F}_{\mu\nu} - \mathbb{E}[\tilde{F}_{\mu\nu}] \right| \geq c\right) \leq \frac{Q(n)}{c^2}\,,
\end{equation}
where $\mathbb{E}[\tilde{F}_{\mu\nu}]\leq H(n)$ and $Q(n),H(n)\in \OC(1/\alpha^n)$.
\end{corollary}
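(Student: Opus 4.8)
The plan is to read the concentration bound directly off Chebyshev's inequality, so that the entire argument reduces to controlling the second moment of a matrix entry. Writing the scores as $\partial_\mu\log p_i=(1/p_i)\,\partial_\mu p_i$, the empirical FI entry from Eq.~\eqref{eq:FI_empirical} is $\tilde{F}_{\mu\nu}(\thv)=\frac{1}{N}\sum_i (1/p_i^2)\,\partial_\mu p_i\,\partial_\nu p_i$. Chebyshev's inequality immediately gives $\Pro(|\tilde{F}_{\mu\nu}-\mathbb{E}[\tilde{F}_{\mu\nu}]|\geq c)\leq \Var[\tilde{F}_{\mu\nu}]/c^2$, so I would set $Q(n)=\Var[\tilde{F}_{\mu\nu}]\leq\mathbb{E}[\tilde{F}_{\mu\nu}^2]$, and the whole task becomes showing $\mathbb{E}[\tilde{F}_{\mu\nu}^2]\in\OC(1/\alpha^n)$. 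The bound $H(n)$ on $\mathbb{E}[\tilde{F}_{\mu\nu}]$ comes for free from Proposition~\ref{prop:FI}, since $|\mathbb{E}[\tilde{F}_{\mu\nu}]|\leq\mathbb{E}[|\tilde{F}_{\mu\nu}|]\leq G(n)$, letting me take $H(n)=G(n)$.

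To bound the second moment I would first apply Jensen's inequality to pull the square through the uniform average over data points, and then Cauchy--Schwarz in the parameter expectation to decouple the indices $\mu$ and $\nu$:
\begin{align}
    \mathbb{E}[\tilde{F}_{\mu\nu}^2] &\leq \frac{1}{N}\sum_i \mathbb{E}\big[(\partial_\mu\log p_i)^2(\partial_\nu\log p_i)^2\big] \nonumber\\
    &\leq \frac{1}{N}\sum_i \sqrt{\mathbb{E}[(\partial_\mu\log p_i)^4]\,\mathbb{E}[(\partial_\nu\log p_i)^4]}\,.
\end{align}
Using the clipping hypothesis $p_i\in[b,1]$ with $b\in\Omega(1/\poly(n))$ inherited from Corollary~\ref{corollary:cor1}, each score obeys $|\partial_\mu\log p_i|\leq\poly(n)\,|\partial_\mu p_i|$, so the problem reduces to bounding the \emph{fourth} moment $\mathbb{E}[(\partial_\mu p_i)^4]$ of the derivative of a linear expectation value of the form of Eq.~\eqref{eq:measurement}.

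The crucial step --- and the one I expect to be the main obstacle --- is passing from the \emph{second}-moment control that the barren-plateau assumption supplies to this fourth-moment bound, and this is exactly where the parameter-shift hypothesis is needed. Under that rule $\partial_\mu p_i=\tfrac{1}{2}(p_i^+-p_i^-)$ is a difference of two genuine probabilities $p_i^\pm\in[0,1]$, so $|p_i^+-p_i^-|\leq 1$ and hence
\begin{equation}
    (\partial_\mu p_i)^4=\tfrac{1}{16}(p_i^+-p_i^-)^4\leq\tfrac{1}{16}(p_i^+-p_i^-)^2=\tfrac{1}{4}(\partial_\mu p_i)^2\,.
\end{equation}
Taking the parameter expectation collapses the fourth moment onto the second, $\mathbb{E}[(\partial_\mu p_i)^4]\leq\tfrac{1}{4}\,\mathbb{E}[(\partial_\mu p_i)^2]$, and since the gradient has zero mean under a barren plateau this second moment equals $\Var[\partial_\mu p_i]$. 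Because $p_i$ is of the form of Eq.~\eqref{eq:measurement}, the barren-plateau assumption Eq.~\eqref{eq:var-BP} then gives $\mathbb{E}[(\partial_\mu p_i)^4]\in\OC(1/\alpha^n)$.

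Finally I would reassemble the pieces: the fourth moments of the scores are $\OC(\poly(n)/\alpha^n)$, Cauchy--Schwarz preserves this scaling, the average over the $N$ data points does not change the order, and absorbing the polynomial prefactor into a slightly smaller exponential base yields $\mathbb{E}[\tilde{F}_{\mu\nu}^2]\in\OC(1/\alpha^n)$. Setting $Q(n)=\Var[\tilde{F}_{\mu\nu}]$ and $H(n)=G(n)$ completes the argument, with both quantities in $\OC(1/\alpha^n)$. The only point worth double-checking is that the collapse $(p_i^+-p_i^-)^4\leq(p_i^+-p_i^-)^2$ genuinely relies on the shifted quantities being probabilities in $[0,1]$; for a more general parameter-shift relation with a gate-dependent shift coefficient one instead carries an $\OC(1)$ constant, which does not affect the exponential scaling.
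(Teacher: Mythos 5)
Your proof is correct, and it takes a genuinely different route from the paper's. The paper attacks $\Var[\tilde{F}_{\mu\nu}]$ head-on: it applies its variance-of-a-sum lemma (Supplemental Lemma~\ref{lem:var-sum}) to the sum over data points, then its variance-of-a-product lemma (Supplemental Lemma~\ref{lem:var-product}) to each term $\partial_\mu f\,\partial_\nu f$, tracks the resulting covariance and $g_i$ factors inherited from Theorem~\ref{thm:var-generic-loss} and Proposition~\ref{prop:FI}, and only then invokes the parameter-shift rule to get $|\partial_\nu p_i|_{max}\leq \frac{1}{2}$ before finishing with Chebyshev. You instead discard the mean ($\Var[\tilde{F}_{\mu\nu}]\leq\mathbb{E}[\tilde{F}_{\mu\nu}^2]$) and control the raw second moment by Jensen over the data average and Cauchy--Schwarz in $\mu,\nu$, reducing everything to fourth moments of $\partial_\mu p_i$; the parameter-shift rule then enters in exactly the same essential role as in the paper --- supplying the almost-sure bound on the gradient --- but you use it to collapse the fourth moment onto the second via $(\partial_\mu p_i)^4\leq\frac{1}{4}(\partial_\mu p_i)^2$, which the barren-plateau hypothesis then controls. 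Your identification of this moment-collapse as the crux is exactly right: without an almost-sure gradient bound, second-moment (BP) information cannot constrain the variance of the quadratic-in-gradients FI entries, which is precisely why the parameter-shift hypothesis appears in the statement. What each route buys: the paper's bookkeeping recycles verbatim the machinery already built for the generic loss $f(\ell_i,y_i)$ and yields somewhat tighter constants, whereas yours is shorter and more elementary, at the cost of lossier $\poly(n)$ prefactors (from the clipping bound $p_i\geq b$) that you correctly absorb by shrinking the exponential base. Two minor points of parity rather than gaps: your step $\mathbb{E}[(\partial_\mu p_i)^2]=\Var[\partial_\mu p_i]$ uses $\mathbb{E}[\partial_\mu p_i]=0$, which Eq.~\eqref{eq:var-BP} alone does not force, but the paper's BP characterization explicitly asserts gradient concentration at zero and the paper's own chain of bounds likewise needs the $(\mathbb{E}[\partial_\nu \ell_i])^2$ terms to vanish under the same reading of the assumptions; and your sourcing of $H(n)=G(n)$ from Proposition~\ref{prop:FI} matches how the paper obtains $H(n)$ in its appendix.
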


We note that these two results imply that when the linear expectation values exhibit a BP, then one requires an exponential number of shots to estimate the entries of the  FI matrix, and concomitantly its eigenvalues. This also implies that optimization methods that use the FI, such as natural gradient methods, cannot navigate the flat landscape of the BP (without spending exponential resources).

\section{Numerical Results}
In this section, we present numerical results studying the trainability of QNNs in supervised binary classification tasks with classical data. Specifically, we analyze the effect of the dataset, the embedding, and the locality of the measurement operator on the trainability of QML models. In what follows, we first describe the dataset, embedding, QNN, measurement operators and loss functions used in our numerics.

We consider two different types of datasets, one composed of structured data and the other one of unstructured random data. This allows us to further analyze how the data structure can affect the trainability of the QML model. First, the structured dataset is formed from handwritten digits from the MNIST dataset~\cite{lecun1998mnist}. Here, greyscale images of "0" and "1" digits are converted to length-$n$ real-valued vectors $\vec{x}$ using a principal component analysis method~\cite{jolliffe2005principal} (see also Appendix \ref{sec:appdx-numerics} for a detailed description). Then, for the unstructured dataset we  randomly generate vectors $\vec{x}$ of length $n$ by uniformly sampling each of their components from $[-\pi,\pi]$. In addition,  each random data point is randomly assigned a label. 

\begin{figure}[t]
	\includegraphics[width= 1 \columnwidth]{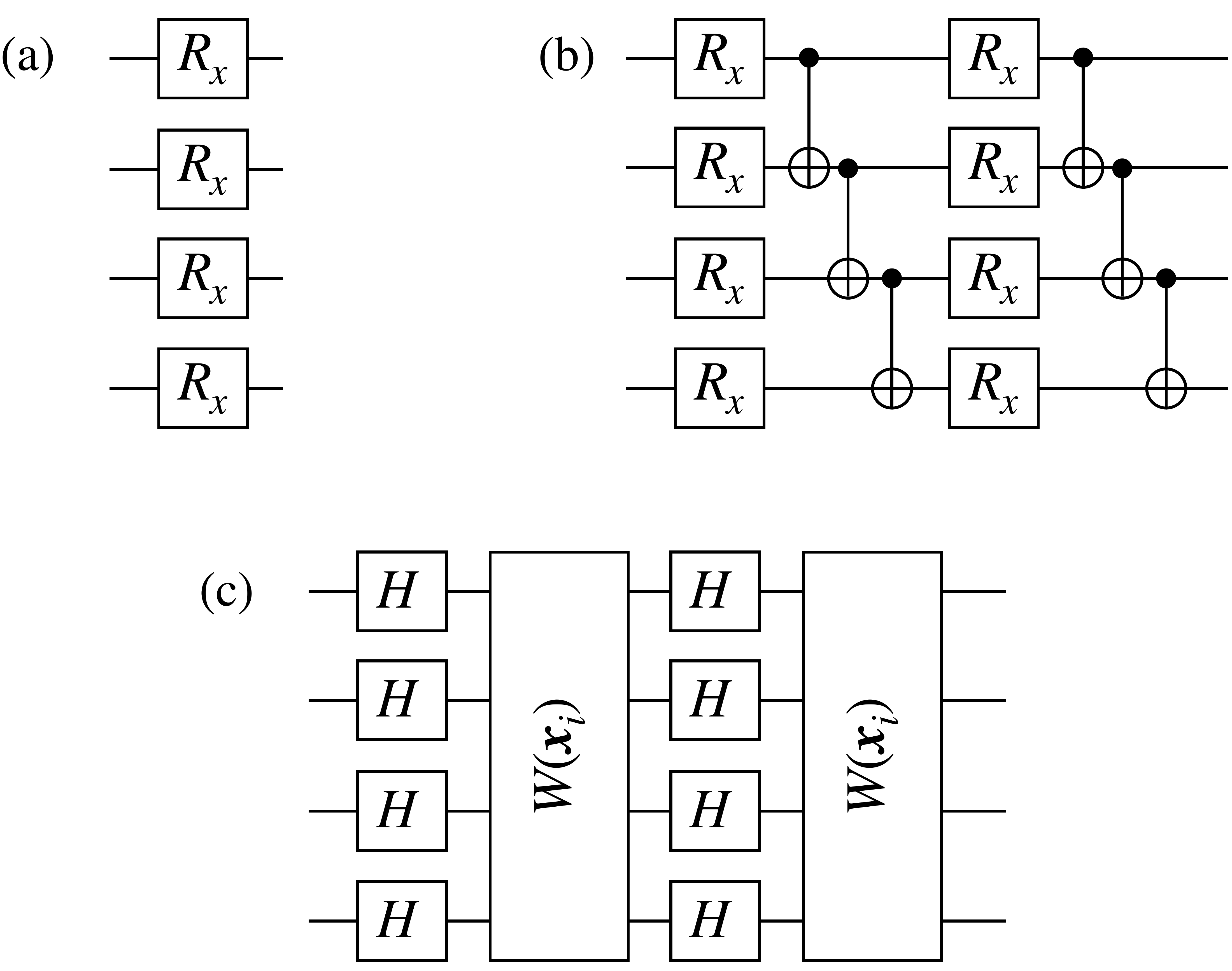}
	\caption{\textbf{Circuits for embedding unitaries used in our numerics.} (a) Tensor Product Embedding (TPE), composed of single qubit rotations around the $x$-axis. Here, the encoded state $\rho_i$ is obtained by applying a rotation  $R_x$ on the $j$-th qubit whose angle is the  $j$-th component of the vector $\vec{x}_i$. (b) Hardware Efficient Embedding (HEE). A  layer is composed of single qubit rotation around the $x$-axis whose rotation angles are assigned in the same way as in the TPE. After each layer of rotations, one applies entangling gates acting on adjacent pairs of qubits. (c) Classically Hard Embedding (CHE). Each unitary $W(\vec{x}_i)$ is composed of  single- and two-qubit gates that are diagonal in the computational basis. } \label{fig:embedding}
\end{figure}

In all numerical settings that we study, the embedding $\EC_{\vec{x}_i}^E$ is a unitary acting on $n$ qubits, which we denote as $V(\vec{x}_i)$. Thus, the output state of the embedding is a pure state of the form $\ket{\psi(\vec{x}_i)}=V(\vec{x}_i)\ket{\vec{0}}$. As shown in Fig.~\ref{fig:embedding}, we use three different embedding schemes for $V(\vec{x}_i)$. The first is the Tensor Product Embedding (TPE). The TPE is composed of single-qubit rotations around the $x$-axis so that $\ket{\psi(\vec{x}_i)}$ is obtained by applying a rotation  on the $j$-th qubit whose angle is the  $j$-th component of the vector $\vec{x}_i$. The second embedding scheme is presented in Fig.~\ref{fig:embedding}(b), and is called the Hardware Efficient Embedding (HEE). In a single layer of the HEE, one applies rotations around the $x$-axis followed by entangling gates acting on adjacent pairs of qubits. Finally, we refer to the third scheme as the Classically Hard Embedding (CHE). The CHE was proposed in~\cite{havlivcek2019supervised}, and is based on the fact that the inner products between output states of $V(\vec{x}_i)$ are believed to be hard to classically simulate as the depth and width of the embedding circuit increases. The unitaries  $W(\vec{x}_i)$ in each layer of the CHE are composed of  single- and two-qubit gates that are diagonal in the computational basis. We refer the reader to Appendix~\ref{appx:che-architecture} for a description of the unitaries $W(\vec{x}_i)$.

For the QNN in the model we consider two different ansatzes. The first is composed of a single layer of parametrized single-qubit rotations $R_y$ about the $y$-axis. Here, the output of the QNN is an $n$-qubit state. The second QNN we consider is the Quantum Convolutional Neural Network (QCNN) introduced in~\cite{cong2019quantum}. The QCNN is composed of a series of convolutional and pooling layers that reduce the dimension of the input state while preserving the relevant information. In this case, the output of the QNN is a $2$-qubit state. We refer the reader to  Appendix~\ref{appx:qcnn-architecture} for a more detailed description of the QCNN we used.

When using the QNN composed of $R_y$ rotations, we apply a global measurement on all qubits to compute the expectation value of the global operator $Z^{\otimes n}$. Thus, the predicted label and label probabilities are given by Eqs.~\eqref{eq:label-pred}--\eqref{eq:prob-parity}. Since global measurements are expected to lead to trainability issues, we also propose a local measurement  where one computes the expectation value of $Z^{\otimes2}$ over the reduced state of the middle two qubits. Note that this is equivalent to computing the parity of the length-$2$ output bitstrings. On the other hand, when using the QCNN, one naturally has to perform a local measurement as here the output is a $2$-qubit state. Thus, here we also compute the expectation value of $Z^{\otimes2}$.

Finally, we note that in our numerics we study the scaling of the gradient of both the mean-squared error and log-likelihood loss functions of Eqs.~\eqref{eq:loss-mse}  and Eq.~\eqref{eq:loss-log}, respectively. Moreover, we also consider the scaling of the gradients of the linear expectation value in Eq.~\eqref{eq:measurement}.

\subsection{Global measurement}

\begin{figure}[t!]
	\includegraphics[width= 1 \columnwidth]{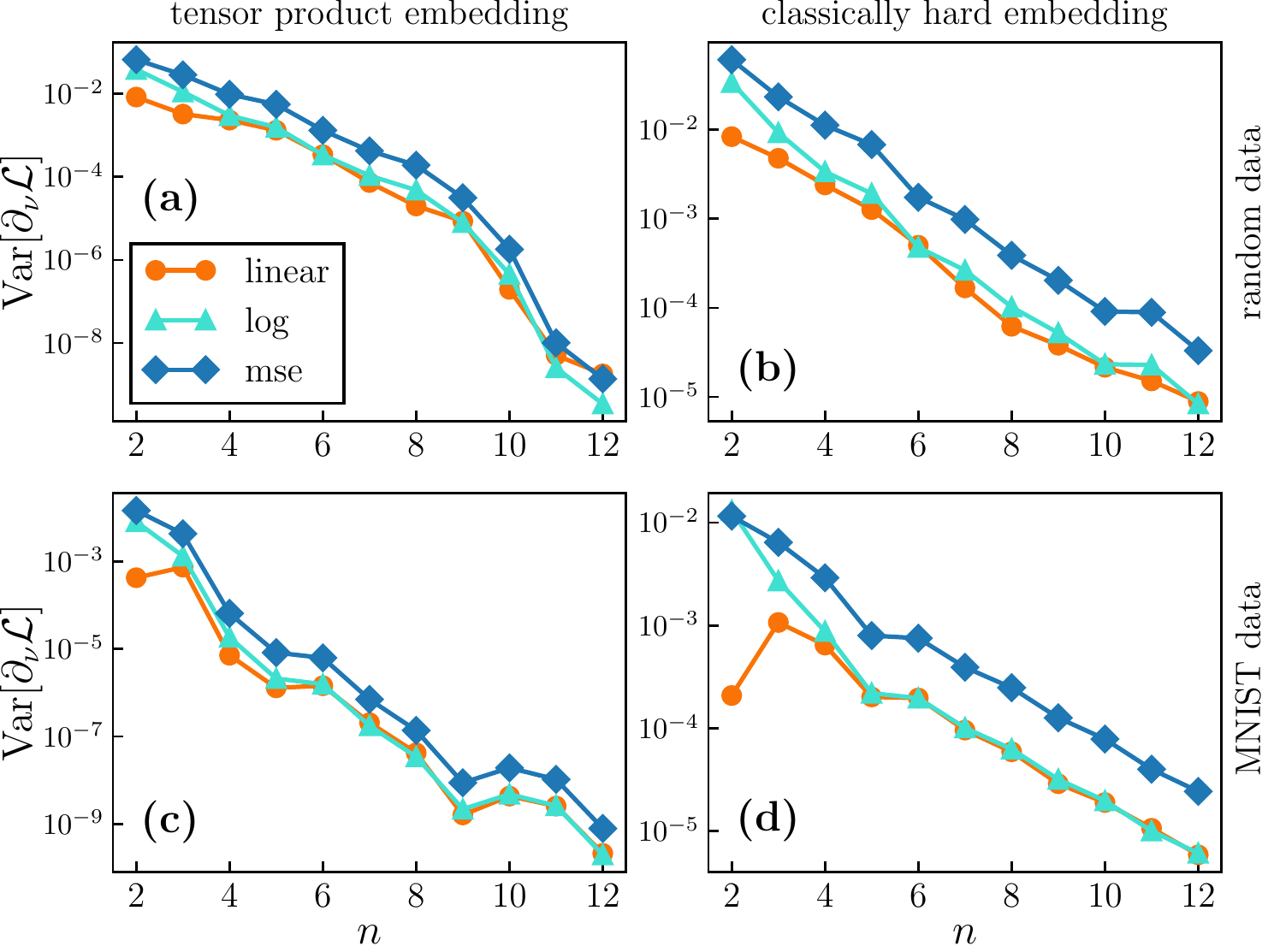}
	\caption{\textbf{Variance of the  partial derivative versus number of qubits.} Here we consider linear expectation, log-likelihood and mean squared error loss functions with global measurement. In (a)-(b) we consider an unstructured (random)  datasets, while in (c)-(d) a structured (MNIST) dataset. The classical data is encoded via the TPE (a)-(c) and the CHE (b)-(d) schemes. We plot the variance of the partial derivative versus number of qubits for all loss functions. The partial derivative is taken over the first parameter of the tensor product QNN. } \label{fig:lin-log-quad-MNIST}
\end{figure}

Here we first study the effect of performing global measurements on the output states of the QNN. As shown in~\cite{cerezo2020cost}, we expect that linear functions of global measurements will have BPs. For this purpose we consider both structured (MNIST) and unstructured (random) datasets encoded through the TPE and CHE schemes (see Fig.~\ref{fig:embedding}). The QNN is taken to be the tensor product of single qubit rotations and we measure the expectation value of $Z^{\otimes n}$.

Figure~\ref{fig:lin-log-quad-MNIST} presents results where we numerically compute the variance of partial derivatives of the linear expectation values in~\eqref{eq:measurement}, the mean squared error loss function in Eq.~\eqref{eq:loss-mse}, and the negative log-likelihood loss function in Eq.~\eqref{eq:loss-log} versus the number of qubits $n$. The variance is evaluated over 200 random sets of QNN parameters and the dataset is composed of $N=10n$ data points. 

Figure~\ref{fig:lin-log-quad-MNIST} shows that, as expected from~\cite{cerezo2020cost}, the variance of the partial derivative of the linear loss function vanishes exponentially with the system size, indicating the presence of a BP according to Eq.~\eqref{eq:var-BP}. Moreover, the plot shows that for all dataset and embedding schemes considered, the variance of the partial derivatives of the log-likelihood and mean squared error loss functions also vanish exponentially with the system size. This scaling is in accordance with the results of Corollary~\ref{corollary:cor1}.

\begin{figure}[t]
\includegraphics[width=.99\columnwidth]{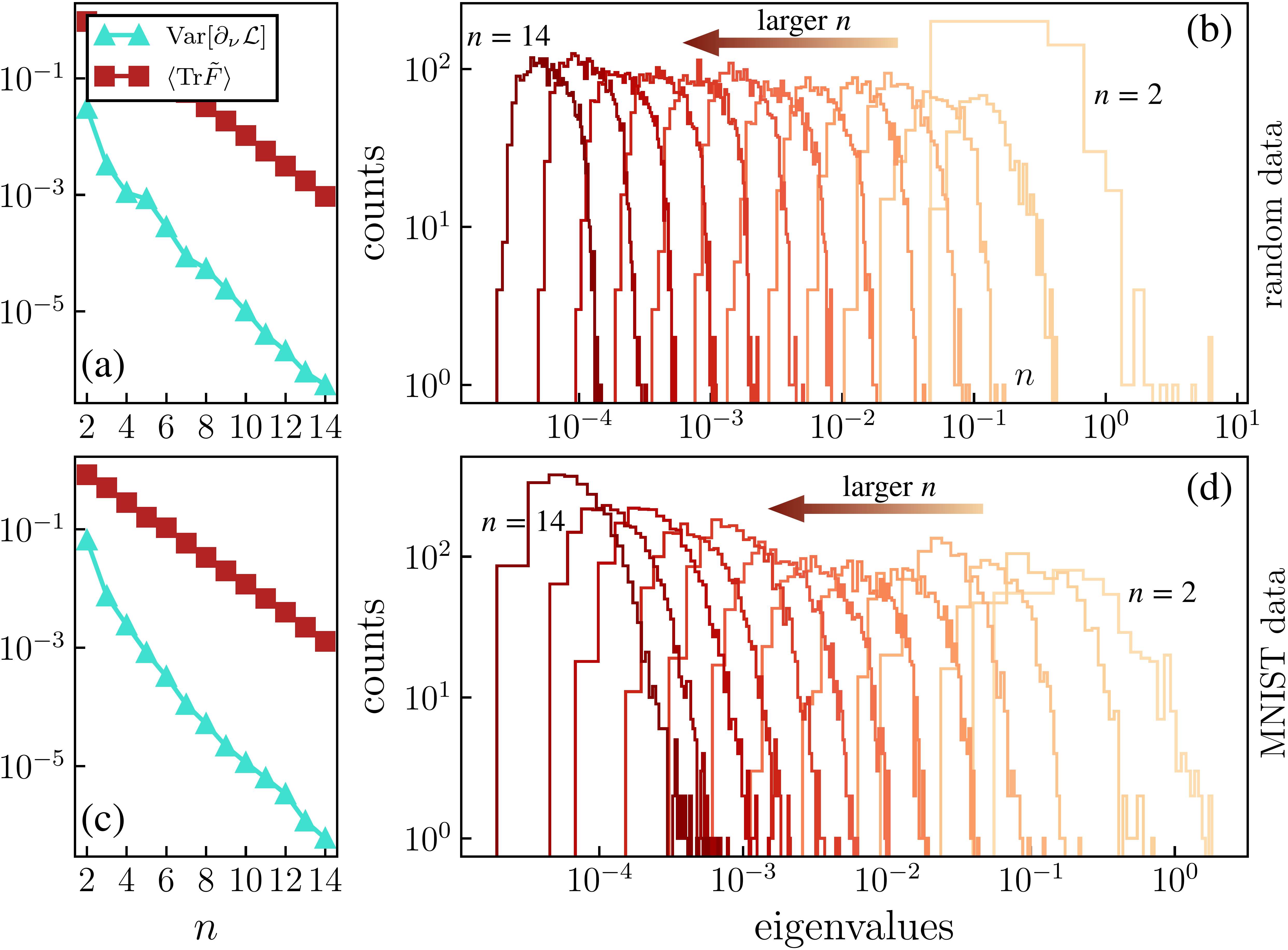}	
	\caption{\textbf{Trace and spectrum of the Fisher Information matrix in a BP.} The top panels correspond to an unstructured (random) dataset, with the bottom to a  structured (MNIST) dataset. In both cases the data is encoded using the  CHE scheme and then sent through  the tensor product QNN with a global measurement. In (a) and (c), we plot the variance of the partial derivatives of the log-likelihood loss function, and the expectation value of the trace of the empirical FI matrix versus the number of qubits $n$. Here, the expectation values are taken over $200$ different sets of QNN parameters.   In (b) and (d), we show the eigenvalues distribution of the empirical FI matrix for increasing numbers of qubits. }
	\label{fig:global_var_trace_fi_IQP}
\end{figure}

Here we note that the presence of BPs can be further characterized through the spectrum of the empirical FI matrix~\cite{abbas2020power}. As mentioned in Section~\ref{sec:fi-matrix}, in a BP the magnitudes of the eigenvalues of the empirical FI matrix  will decrease exponentially as the number of qubits increases.  In Fig.~\ref{fig:global_var_trace_fi_IQP}(a) and (c) we plot the trace of the empirical FI matrix versus the number of qubits for the same structured and unstructured datasets of Fig.~\ref{fig:lin-log-quad-MNIST}. As expected, the trace decreases exponentially with the problem size when using a global measurement due to the loss function exhibiting a BP. While the trace of the empirical FI provides a coarse-grained study of the eigenvalues, we also show in   Fig.~\ref{fig:global_var_trace_fi_IQP}(b) and (d) representative results for the eigenvalue spectrum distributions of the  empirical FI matrix. One can see here that all eigenvalues become exponentially vanishing with increasing system size.

Our results here show that even for a trivial QNN, and independently of the dataset and embedding scheme, global measurements in the loss function lead to exponentially small gradients, and thus to BPs in QML models. Moreover, we have also verified that the eigenvalues of the empirical FI matrix are, as expected from Corollary~\ref{corollary:cor2}, exponentially small in a BP, showing that an exponential number of shots are needed to accurately estimate  the matrix elements, eigenvalues, and trace of  the empirical FI matrix. This precludes the possibility of efficiently estimating quantities such as the normalized empirical FI matrix $\tilde{F}(\thv)/\Tr[\tilde{F}(\thv)]$~\cite{abbas2020power}.

\begin{figure}[t]
	\includegraphics[width= 1 \columnwidth]{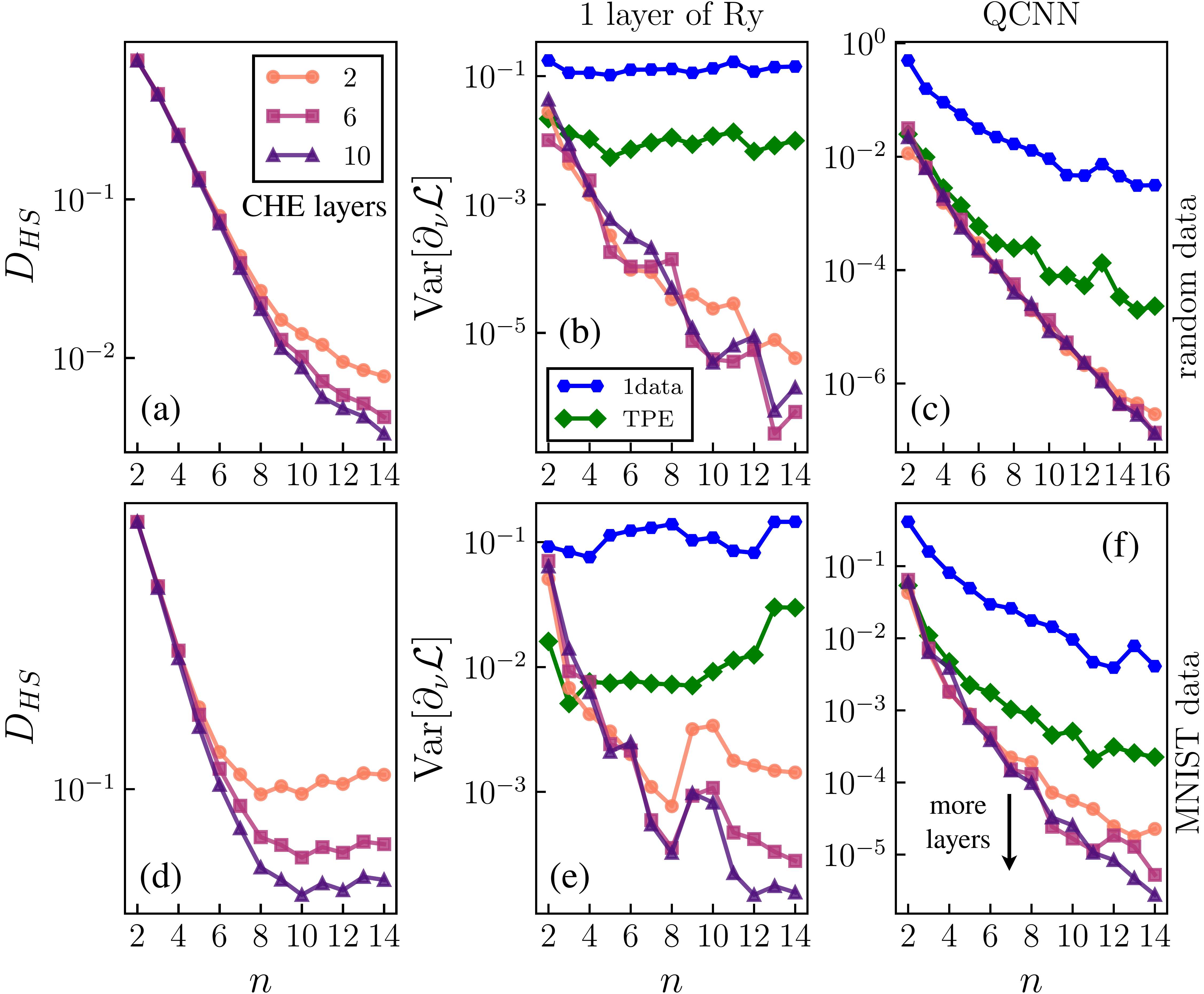}
	\caption{\textbf{Effect of CHE scheme with local measurement on trainability.} The top panels correspond to an unstructured (random) dataset, with the bottom to a  structured (MNIST) dataset. In all cases we used the CHE scheme with different number of layers. Panels  (a) and (d), show the Hilbert-Schmidt distance $D_{HS}(\rho^{(2)}_{i},\id/4)$ versus the number of qubits. Here,  $\rho^{(2)}_{i}$ is the reduced state of the central two qubits. Panels (b) and (e) show  the variances of the partial derivative of the log-likelihood loss function versus the number of qubits $n$ for the tensor product QNN, with panels (c) and (f)  for a QCNN. We also plot as reference the variances using the non-entangling TPE scheme when the loss is evaluated over the dataset and over a single data point. }\label{fig:local_var_trace_IQP}
\end{figure}

\subsection{Dataset and embedding-induced barren plateaus}\label{sec:numerics-embedding}

Here we numerically study how the embedding scheme and the dataset can potentially lead to trainability issues. Specifically, we recall from Section \ref{sec:embedding-induced-BP} that highly-entangling embedding schemes can lead to reduced sates being concentrated around the maximally mixed state, and thus be harder to train local gates on. To check how close reduced states at the output of the CHE and HEE schemes are, we average the Hilbert-Schmidt distance $D_{HS}(\rho_i^{(2)},\id/4)$ between the maximally mixed state and the reduced state $\rho_i^{(2)}$ of the central two qubit. In addition, we further average over 2000 data points from a structured (MNIST) and unstructured (random) dataset. 

Results are shown in Fig.~\ref{fig:local_var_trace_IQP}(a) and (d), where we plot  $D_{HS}(\rho_i^{(2)},\id/4)$ versus the number of qubits for the CHE scheme with different number of layers. As expected, here we see that increasing the number of layers in the embedding leads to higher entanglement in the encoded states $\rho_i$, and thus to reduced states being closer to the maximally mixed state. Moreover, here we note that the structure of the dataset also plays a role in the mixedness of the reduced state, as the Hilbert-Schmidt distances $D_{HS}(\rho_i^{(2)},\id/4)$ for the unstructured random dataset can be up to one order of magnitude smaller than those for structured dataset.

\begin{figure}
	\includegraphics[width= 1 \columnwidth]{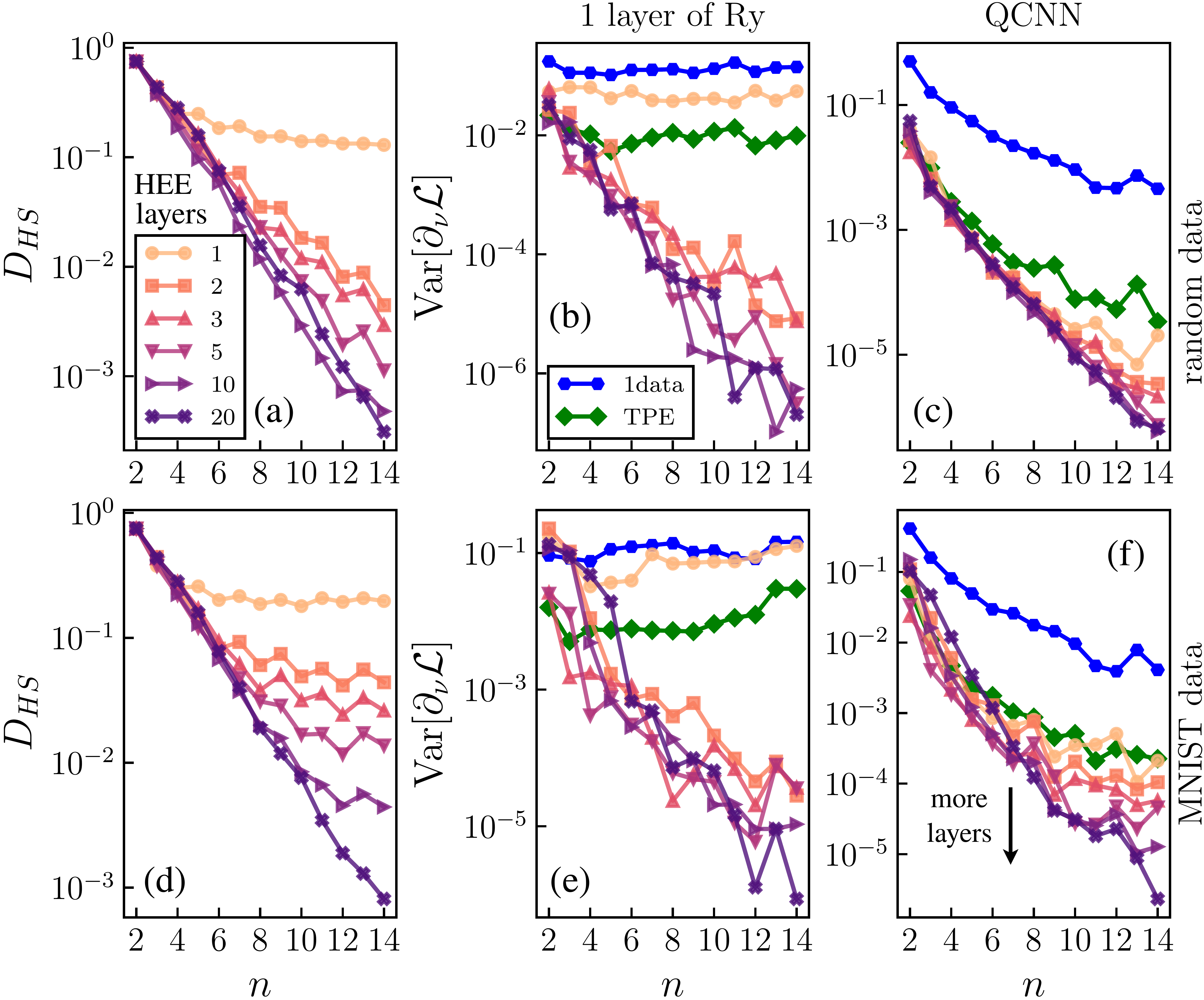}
	\caption{\textbf{Effect of HEE scheme with local measurement on trainability.}  The top panels correspond to an unstructured (random) dataset, with the bottom to a  structured (MNIST) dataset. In all cases we use the HEE scheme with increasing number of layers. Panels  (a) and (d), show the Hilbert-Schmidt distance $D_{HS}(\rho_i^{(2)},\id/4)$ versus the number of qubits. Here,  $\rho^{(2)}_{i}$ is the reduced state of the central two qubits. Panels (b) and (e) show  the variances of the partial derivative of the log-likelihood loss function versus the number of qubits $n$ for the tensor product QNN, with panels (c) and (f)  for a QCNN. We also plot as reference the variances using the non-entangling TPE scheme when the loss is evaluated over the dataset and over a single data point.}
	\label{fig:local_var_trace_HEA}
\end{figure}

In Fig.~\ref{fig:local_var_trace_IQP} we also show the variance of the log-likelihood loss function partial derivative as a function of the number of qubits and the number of CHE layers. Here we use both the tensor product QNN (panels (b) and (e)) and the QCNN (panels (c) and (f)), and we compute local expectation values of $Z^{\otimes 2}$. Moreover, here the dataset is composed of $N=10n$ points, and the variance is taken by averaging over $200$ sets of random QNN parameters. Since both the tensor product QNN with local cost and the QCNN  are not expected to exhibit BPs with no training data and separable input states (see~\cite{cerezo2020cost} and~\cite{pesah2020absence}, respectively), any unfavorable scaling arising here will be due to the structure of the data or the embedding scheme. To ensure this is the case, we plot two additional quantities as references. The first is obtained for the case when the embedding scheme is simply replaced with the TPE, representing the scenario of a non-entangling encoder. In the second, we also use the TPE encoder but rather than computing the loss function over the whole dataset, we only compute it over a single data point, i.e.  $\LC_{log}(\thv)=-\log p_i$. Then, for this single-data point loss function, we study the scaling of the partial derivative variance and we finally  average over the dataset, i.e. $\sum_i \Var[\partial_\nu \log p_i]/N$. 
This allows us to characterize the effect of the size and the randomness associated with the dataset.

For the unstructured random dataset, we can see that $\Var[\partial_\nu \LC(\thv)]$ appears to vanish exponentially for both QNNs and for all considered number of layers in the CHE. This shows that the randomness in the dataset ultimately translates into randomness in the loss function and in the presence of a BP. For the structured dataset, we can see that $\Var[\partial_\nu \LC(\thv)]$ does not exhibit an exponentially vanishing behaviour for small number of CHE layers. However, as the depth of the embedding increases, the variances become smaller. In particular, when using a QCNN, increasing the number of CHE layers appears to change the behaviour of $\Var[\partial_\nu \LC(\thv)]$ towards a more exponentially vanishing scaling. Finally, we observe that the variance of the loss function constructed from a single data point is always larger than the loss constructed from $N$ data points. This indicates that the larger the dataset, the smaller the variance. 

To further study this phenomenon, in Fig.~\ref{fig:local_var_trace_HEA} we repeat the calculations of Fig.~\ref{fig:local_var_trace_IQP} but using the  HEE scheme instead. That is, we show the scaling of $D_{HS}(\rho_{k}(\vec{x}_i),\id/4)$ and $\Var[\partial_\nu \LC(\thv)]$ versus the number of qubits for the HEE scheme with different number of layers and for structured (MNIST) and unstructured (random) datasets. 

Here, the effect of the entangling power of the embedding on the Hilbert-Schmidt distance $D_{HS}(\rho_{k}(\vec{x}_i),\id/4)$   can be seen in panels (a) and (d) of Fig.~\ref{fig:local_var_trace_HEA}. Therein one can see that as the number of layers of the HEE increases (and thus also entangling power~\cite{holmes2021connecting}) the distance to the maximally mixed state vanishes exponentially with the system size. One can see here that, independently of the structure of the dataset, the large entangling power of the embedding scheme leads to states that are essentially maximally mixed on any reduced pair of qubits. As seen in Fig.~\ref{fig:local_var_trace_HEA}(b), (c), (e) and (f), the latter then translates into an exponentially vanishing $\Var[\partial_\nu \LC(\thv)]$ and thus a BP.

These results indicate that the choice of dataset and embedding method can have a significant effect on the trainability of the QML model. Specifically, QNNs that have no BPs when trained on trivial input states can have exponentially vanishing gradients arising from either the structure of the dataset, or the large entangling power of the embedding scheme. Moreover, these results show that  the  Hilbert-Schmidt distance can be used as an indicator  of how much the embedding can potentially worsen the trainability of the model.

\subsection{Practical usefulness of the embedding scheme and local measurements}

As discussed in Section~\ref{sec:embedding-induced-BP}, a good  embedding  satisfies (at least) the following three criteria: classically hard to simulate, practical usefulness, not inducing trainability issues. In the previous sections, we  have studied how the trainability of the model can be affected  by the embedding choice and dataset. Here we point out another subtlety, namely, that ``classically-hard-to-simulate'' and ``practical usefulness'', do not always coincide. 
Particularly, we here show that the CHE scheme can lead to poor performance for some standard benchmarking test. 

For this purpose we choose the QCNN architecture, which uses a local measurement and is known to not exhibit a BP~\cite{pesah2020absence}, to solve the task of classifying handwritten digits `0' and `1'  from the MNIST dataset (see also~\cite{hur2021quantum} for a similar study using QCNNs for MNIST classification). Then,  we compare two choices for the embedding scheme. The first is a classically simulable scheme given by a single layer of the  HEE, whilst the other is the (conjectured) classically hard to simulate two-layered CHE.

To make this comparison fair, both encoders are subjected to an identical setting: QCNN implemented on $n=8$ qubits, compute the expectation values of $Z^{\otimes 2}$, use the log-likelihood loss function, and have a  training and testing datasets of respective size 400 and 40. In all cases the classical optimizer used is ADAM~\cite{kingma2015adam} with a $0.02$ learning rate. For each training iteration, the expectation value measured from the QML model is fed into an additional classical node with a $tanh$ activation function. 

In Fig.~\ref{fig:CHE-HEE-qcnn-train}, we show the training loss functions and test accuracy versus the number of iterations for 10 different optimization runs using each encoders. We observe that, despite being classically simulable, the model with HEE has a significantly better performance (above $90\%$ test accuracy)  than the model with CHE (around $65\%$ accuracy) on both training and testing  for this specific task.  Hence, we we can see a particular embedding example where hard-to-simulate  does not translate into practical usefulness. 

We emphasize that this result should not be interpreted as the CHE scheme being generally unfavorable for practical purposes. Rather, that additional care should be taken when choosing encoders to suit specific tasks, and to highlight the challenge of designing encoders to satisfy all necessary criteria for achieving a quantum advantage.

\begin{figure}
 	\includegraphics[width= 1 \columnwidth]{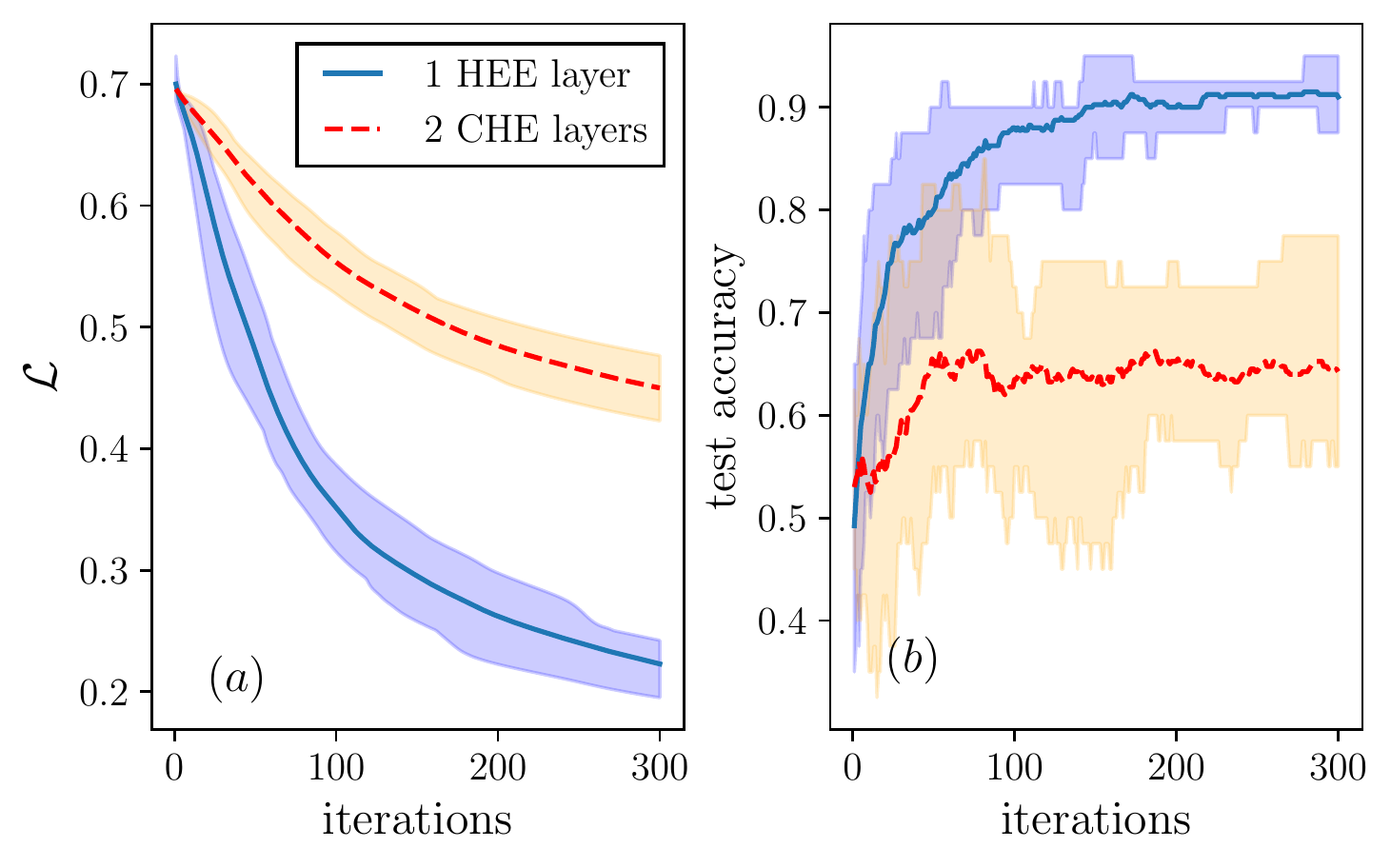}
 	\caption{\textbf{Loss function (a) and test accuracy (b) versus number of iterations for MNIST classification using a QCNN.}  We train the QML model with $n=8$ qubits using two different embedding schemes (1 HEE layer and 2 CHE layers) for the task of binary classification between digits `0' and `1' from the MNIST dataset. Solid lines represents the average over 10 instances of different initial parameters, and shaded areas represents the range over these instances.}\label{fig:CHE-HEE-qcnn-train}
 \end{figure}

\section{Implications for the literature}

Here we briefly summarize the implications of our results for the QML literature, specifically the literature on training quantum neural networks. 

First, we have shown that features deemed detrimental to training linear cost functions in the VQA framework will also lead to trainability issues in QML models. This is particularly relevant to the use of global observables such as measuring the parity of the output bitstrings on all qubits, which have been employed in the QML literature. We remark that there is no \textit{a priori} reason to consider the global parity. One could instead just measure a subset of qubits and assign labels via local parities, or even average the local parities across subsets of qubits. As shown in our numerics section, local parity measurements are practically useful and one can use them to optimize the model and achieve small training and generalization errors.

Second, our results indicate that QML models can exhibit scaling issues due to the dataset. Specifically, when the input states to the QNN have large amounts of entanglement, then the QNN's parameters can become harder to train as local states will be concentrated around the maximally mixed state. This is particularly relevant when dealing with classical data, as here one has the freedom to choose the embedding. This points to the fact that the choice of embedding needs to be carefully considered and that trainability-aware encoders should be prioritized and developed. 

Unfortunately for the field of QML, data embeddings cannot solve BPs by themselves. In other words, as proven here, the choice of embedding cannot in practice mitigate the effect of BPs or prevent a BP that would otherwise exist for the QNN. For instance, the embedding cannot prevent a BP arising from the use of a global measurement, or from the use of a QNN that forms a 2-design. Hence, while embeddings can lead to a novel source of BPs, they cannot cure a BP that a particular QNN suffers from.

Finally, we show that optimizers relying on the FI matrix, such as natural gradient descent, require an exponential number of measurement shots to be useful in a BP. This is due to the fact that the matrix elements of the empirical FI matrix are exponentially small in a BP. Hence, quantities such as the normalized empirical FI matrix, which has been employed in the literature, are also inaccessible without incurring in an exponential cost.

\section{Discussion}

Quantum Machine Learning (QML) has received significant attention due to its potential for accelerating data analysis using quantum computers. A near-term approach to QML is to train the parameters of a Quantum Neural Network (QNN), which consists of a parametrized quantum circuit, in order to minimize a loss function associated with some data set. The data can either be quantum or classical (as shown in Fig.~\ref{fig:1}), with classical data requiring a quantum embedding map. While this novel, general paradigm for data analysis is exciting, there are still very few theoretical results studying the scalability of QML. The lack of theoretical results motivates our work, in which we focused on the trainability and gradient scaling of QNNs.

In the context of trainability, most of the previous results have been derived for the field of Variational Quantum Algorithms (VQAs). While VQAs and QML models share some similarities in that both train parametrized quantum circuits, there are some key differences that make it difficult to directly apply VQA trainability results to the QML setting. In this work, we bridged the gap between the VQA and QML frameworks by rigorously showing that gradient scaling results from VQAs will hold in a QML setting. This involved connecting the gradients of linear cost functions to those of mean squared error and log-likelihood cost functions.

In light of our results, many QML proposals in the literature would need to be revised, if they aim to be scalable. For instance, we rigorously proved that  features deemed detrimental for VQAs, such as global measurements or deep unstructured (and thus highly entangling) ansatzes, should also be avoided in QML settings. These results hold regardless of the data embedding, and hence one cannot expect  the data embedding to solve a barren plateau issue associated with a QNN.

Moreover, due to the use of datasets, we discovered a novel source for barren plateaus in QML loss functions. We refer to this as a Dataset-Induced Barren Plateau (DIBP). DIBPs are particularly relevant  when dealing with classical data, as here additional care must be taken when choosing an embedding scheme. A poor embedding choice could lead to a DIBP. Until now, a ``good'' embedding  was one that is classically hard to simulate and practically useful. However, our results show that a third criterion must be added for the encoder: not inducing gradient scaling issues. This paves the way towards the development of trainability-aware embedding schemes. 

Our numerical simulations verify the DIBP phenomenon, as therein we show how the gradient scaling can be greatly affected both by the structure of the dataset, as well as by the choice of the embedding scheme. Furthermore, our results illustrate another subtlety that arises when using classical data, as the classically-hard-to-simulate embedding of~\cite{havlivcek2019supervised} leads to large generalization error on a standard MNIST classification task. Thus, ``classically-hard-to-simulate'' and ``practical usefulness'' of an encoder do not always coincide.

Taken together, our results illuminate some subtleties in training QNNs in QML models, and show that more work needs to be done to guarantee that QML schemes will be trainable, and thus useful for practical applications.

\section{Acknowledgements}


We thank Michael Grosskopf, Julia Nakhleh, Amira Abbas, Christa Zoufal, and David Sutter for helpful discussions. ST and NAG were supported by the U.S. DOE through a quantum computing program sponsored by the Los Alamos National Laboratory (LANL) Information Science \& Technology Institute. ST was also supported by the National Research Foundation, Prime Minister's Office, Singapore and the Ministry of Education, Singapore under the Research Centres of Excellence programme.  SW was supported by the UKRI EPSRC grant no.~EP/T001062/1 and the Samsung GRP grant. PJC acknowledges initial support from the Laboratory Directed Research and Development (LDRD) program of LANL under project number 20190065DR, as well as support by the U.S. DOE, Office of Science, Office of Advanced Scientific Computing Research, under the Accelerated Research in Quantum Computing (ARQC) program. MC acknowledges initial support from the Center for Nonlinear Studies at LANL, as well as support from the LDRD program of LANL under project number 20210116DR. Research presented in this article was supported by the NNSA’s Advanced Simulation and Computing Beyond Moore’s Law Program at LANL.

\bibliography{quantum.bib}
\clearpage
\newpage
\onecolumngrid

\appendix
\vspace{0.5in}
\setcounter{theorem}{0}
\setcounter{proposition}{0}
\setcounter{corollary}{0}

\begin{center}
	{\Large \bf Appendix} 
\end{center}

In this Appendix we present further details for the  results of the main text. In Section \ref{sec:appdx-supplems} we present some supplemental lemmas. In Section \ref{sec:appdx-theorem} we present a proof for our main result Theorem \ref{thm:var-generic-loss}. In Section \ref{sec:appdx-cor} we present a proof of Corollary \ref{corollary:cor1} where we discuss the implications of our main result for the mean squared error and negative log-likelihood loss functions, and we discuss extensions to the corollary to other machine learning settings in Section \ref{sec:appdx-cor-extensions}. Finally, in Section \ref{sec:appdx-numerics} we provide additional details on our numerical simulations not presented in the main text.

\section{Supplemental lemmas}\label{sec:appdx-supplems}

In this section we present some supplemental lemmas that will be useful in deriving our results.

\begin{suplemma}[Variance of sum of random variables]\label{lem:var-sum}
Given a set of correlated random variables $\{X_i \}_i$, we have 
\begin{align}
    \Var\Big[\sum_i X_i\Big] \leq  \left( \sum_{i}\sqrt{\Var[X_i]} \right)^2\,.
\end{align}
\end{suplemma}
\begin{proof}
The variance of the sum of two correlated random variables is given by
\begin{align} \label{bound-var-sum}
    {\rm Var}[X_1 + X_2] & =  {\rm Var}[X_1] +  {\rm Var}[X_2] +  2{\rm Cov}[X_1,X_2]\,.
\end{align}
Using induction along with the fact that ${\rm Cov}(X_1+X_2,X_3) = {\rm Cov}(X_1,X_3) + {\rm Cov}(X_2,X_3) $, the variance of the full sum can be bounded as 
\begin{align}
    \Var\Big[\sum_i X_i\Big] &= \sum_i \Var[X_i] + \sum_{i\neq j} \Cov[X_i,X_j]\\
    &\leq \sum _i \Var[X_i] + \sum_{i\neq j}\sqrt{\Var[X_i]\Var[X_j]}\\
    &= \left( \sum_{i}\sqrt{\Var[X_i]} \right)^2\,,
\end{align}
where the inequality in the second line comes from the Cauchy-Schwarz inequality.
\end{proof}

\begin{suplemma}[Variance of product] \label{lem:var-product}
Given two correlated random variables $X$ and $Y$, we have
\begin{align}
    \Var[XY] \leq 2\Var[X] |Y^2|_{max} + 2(\mathbb{E}[X])^2\Var[Y]\,, \label{eq:}
\end{align}
where $|Y^2|_{max}$ is the maximum possible value of $Y^2$ i.e. $|Z| _{max} =  \max\{ |Z| : \rm{Pr}(Z) > 0 \}$. 
\end{suplemma}
\begin{proof}
We have
\begin{align}
    \Var[X + Y] &= \Var[X] + \Var[Y] + 2\Cov[X,Y] \\
    &\leq \Var[X] + \Var[Y] + 2\sqrt{\Var[X]\Var[Y]} \\
    &\leq \Var[X] + \Var[Y] + \sqrt{\Var[X]\Var[X]} + \sqrt{\Var[Y]\Var[Y]} \\
    &= 2\Var[X] + 2\Var[Y]\,, \label{eq:var-sum}
\end{align}
where in the first inequality we have used Cauchy-Schwarz, and the second inequality comes from the rearrangement inequality. Now consider 
\begin{align}
    \Var[XY] &= \Var\big[(X-\mathbb{E}[X])Y + \mathbb{E}[X]Y\big]\\
    &\leq 2\Var\big[(X-\mathbb{E}[X])Y\big] + 2\Var\big[\mathbb{E}[X]Y\big]\\
    &\leq 2\mathbb{E}\big[ (X-\mathbb{E}[X])^2Y^2 \big] + 2(\mathbb{E}[X])^2\Var[Y]\\
    &\leq 2\mathbb{E}\big[ (X-\mathbb{E}[X])^2 \big] |Y^2 |_{max} + 2(\mathbb{E}[X])^2\Var[Y] \\
    &= 2\Var[X] |Y^2|_{max} + 2(\mathbb{E}[X])^2\Var[Y]\,,
\end{align}
where in the first inequality we have used Eq.~\eqref{eq:var-sum}, in the second inequality we have used the definition of the variance, and in the third inequality we have simply taken the maximum value for $Y^2$. 
\end{proof}

\section{Proof of Theorem~\ref{thm:var-generic-loss}: Generic Loss Function}\label{sec:appdx-theorem}
In this section we provide the proof of Theorem~\ref{thm:var-generic-loss} which connects the variance of the partial derivative of general loss functions to the variance of the partial derivative of linear expectation values (the quantum model outputs). For convenience, we restate the theorem here.

\begin{theorem}[Variance of partial derivative of generic loss function]\label{thm:var-generic-loss-SM}
Consider the partial derivative of the loss function $\LC(\thv)$ in Eq.~\eqref{eq:generic-loss-function} taken with respect to variational parameter $\theta_\nu \in\thv$. We denote this quantity as $\partial_\nu \LC(\thv) \equiv \partial \LC(\thv)/\partial\theta_\nu$. The following inequality holds \small
\begin{align} \label{eq:supp-var-generic-loss-SM}
    \Var[\partial_\nu \LC(\thv)] \leq \left(\frac{1}{N} \sum_i {g_i} \sqrt{ \Var[\partial_\nu \ell_i(\thv)]+ (\mathbb{E}[\partial_\nu \ell_i(\thv)])^2} \right)^2,
\end{align}
\normalsize
where we used $\ell_i(\thv)$ as a shorthand notation for $\ell_i(\thv;y_i)$ in Eq.~\eqref{eq:measurement}, and where the expectation values are taken over the parameters $\thv$. Moreover, here we defined
\begin{align}
    g_i = \sqrt{2}\max_{\ell_i}\left|\frac{\partial f}{\partial \ell_i}\right|\,, \label{eq:g_i-SM}
\end{align}
where $\frac{\partial f}{\partial \ell_i}$ is the $i$-th entry of the Jacobian $J_{\vec{\ell}}$ with $\vec{\ell}=(\ell_1,\ldots,\ell_N)$, and where we denote as $\max_{\ell_i}\left|\frac{\partial f}{\partial \ell_i}\right|$ the maximum value of the partial derivative of $f(\ell_i,y_i)$.
\end{theorem}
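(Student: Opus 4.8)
The plan is to reduce the variance of the loss gradient to the variances of the individual output gradients $\partial_\nu \ell_i$ by differentiating through $f$ with the chain rule and then controlling the resulting sum of products with the two supplemental lemmas. First I would apply the chain rule. Since the loss aggregates the per-sample contributions (with the $1/N$ normalization used in $\LC_{mse}$ and $\LC_{log}$), differentiation gives
\[
\partial_\nu \LC(\thv) = \frac{1}{N}\sum_i \frac{\partial f}{\partial \ell_i}\,\partial_\nu \ell_i(\thv),
\]
where $\frac{\partial f}{\partial \ell_i}$ is the $i$-th entry of the Jacobian $J_{\vec{\ell}}$ and the $1/N$ can be pulled outside the variance via $\Var[cX]=c^2\Var[X]$. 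Writing $Z_i = \frac{\partial f}{\partial \ell_i}\,\partial_\nu \ell_i(\thv)$, Supplemental Lemma~\ref{lem:var-sum} applied to the correlated variables $\{Z_i\}$ yields
\[
\Var[\partial_\nu \LC(\thv)] \leq \frac{1}{N^2}\left(\sum_i \sqrt{\Var[Z_i]}\right)^2 = \left(\frac{1}{N}\sum_i \sqrt{\Var[Z_i]}\right)^2.
\]

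The next step is to bound each $\Var[Z_i]$ with Supplemental Lemma~\ref{lem:var-product}. Here I would identify $X = \partial_\nu \ell_i(\thv)$ and $Y = \frac{\partial f}{\partial \ell_i}$, both of which are random variables over the parameters $\thv$, so that
\[
\Var[Z_i] \leq 2\,\Var[\partial_\nu \ell_i]\,\left|\left(\frac{\partial f}{\partial \ell_i}\right)^2\right|_{max} + 2\,(\mathbb{E}[\partial_\nu \ell_i])^2\,\Var\!\left[\frac{\partial f}{\partial \ell_i}\right].
\]
To collapse this into the advertised form I would then bound the remaining variance of the Jacobian factor crudely by its maximal magnitude, $\Var[\frac{\partial f}{\partial \ell_i}] \leq |(\frac{\partial f}{\partial \ell_i})^2|_{max}$, which follows from $\Var[Y]\leq \mathbb{E}[Y^2]\leq |Y^2|_{max}$. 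Both terms then share the common factor $(\max_{\ell_i}|\partial f/\partial \ell_i|)^2$, giving $\Var[Z_i]\leq g_i^2\big(\Var[\partial_\nu \ell_i] + (\mathbb{E}[\partial_\nu \ell_i])^2\big)$ with $g_i = \sqrt{2}\max_{\ell_i}|\partial f/\partial \ell_i|$, which is exactly the $\sqrt{2}$ appearing in Eq.~\eqref{eq:g_i-SM}. Substituting $\sqrt{\Var[Z_i]}\leq g_i\sqrt{\Var[\partial_\nu \ell_i]+(\mathbb{E}[\partial_\nu \ell_i])^2}$ back into the sum closes the bound.

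I expect the main subtlety to be the treatment of $\frac{\partial f}{\partial \ell_i}$ as a genuine random variable rather than a constant: because $\ell_i$ depends on $\thv$, the Jacobian entry fluctuates with the parameters, so one cannot simply factor it out of the variance, and the correlations between the factors must be respected. The product lemma handles precisely this, at the cost of the coarse bound $\Var[\frac{\partial f}{\partial \ell_i}]\leq|(\frac{\partial f}{\partial \ell_i})^2|_{max}$, which is what introduces the factor of two. A slightly tighter route using $\Var[Z_i]\leq \mathbb{E}[Z_i^2]\leq |(\frac{\partial f}{\partial \ell_i})^2|_{max}\,\mathbb{E}[(\partial_\nu \ell_i)^2]$ directly would remove this factor, but I would keep the Lemma~\ref{lem:var-product} route so that the constant matches the stated $g_i$.
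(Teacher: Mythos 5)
Your proposal is correct and follows essentially the same route as the paper's own proof: Supplemental Lemma~\ref{lem:var-sum} for the sum over data points, the chain rule $\partial_\nu f(\ell_i,y_i)=\partial_{\ell_i}f\cdot\partial_\nu\ell_i$, Supplemental Lemma~\ref{lem:var-product} with $X=\partial_\nu\ell_i$ and $Y=\partial_{\ell_i}f$, and the same coarse bound $\Var[\partial_{\ell_i}f]\leq|\partial_{\ell_i}f|^2_{max}$ yielding the stated $g_i$. Interchanging the order of the chain rule and the sum lemma is immaterial since the summands coincide pointwise, and your closing observation about tightening the factor of $2$ is a valid aside rather than a deviation.
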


\begin{proof}
For convenience we recall the generic loss function defined in~\eqref{eq:generic-loss-function} takes the form
\begin{align}
    \LC(\thv) = \frac{1}{N} \sum_i f(\ell_i(\thv;y_i),y_i)\,.
\end{align}
The variance of the partial derivative of this generic loss with respect to $\theta_\nu $ can be bounded as
\begin{align}
    \Var[\partial_\nu  \LC] = & \Var\left[ \frac{1}{N} \sum_i \partial_\nu f(\ell_i,y_i)\right] \\
    \leq & \left(\frac{1}{N}\sum_{i} \sqrt{\Var[\partial_\nu f(\ell_i,y_i)]}\right)^2\,,\label{eq:var-d-generic-cost} \\
    = & \left(\frac{1}{N}\sum_{i} \sqrt{\Var[\partial_{\ell_i} f(\ell_i,y_i) \partial_{\nu} \ell_i ]}\right)^2\,, \label{eq:proof-thm1-chain-rule}\\
    \leq & \left(\frac{1}{N}\sum_{i} \sqrt{2\Var[\partial_{\nu} \ell_i ]| \partial_{\ell_i}f(\ell_i,y_i)|^2_{max}+2(\mathbb{E}[\partial_\nu l_i])^2 \Var[\partial_{\ell_i} f(\ell_i,y_i)]}\right)^2\,, \label{eq:proof-thm1-use-lemma2}\\
    \leq & \left(\frac{1}{N}\sum_{i} \sqrt{2\Var[\partial_{\nu} \ell_i ]| \partial_{\ell_i}f(\ell_i,y_i)|^2_{max}+2(\mathbb{E}[\partial_\nu l_i])^2 | \partial_{\ell_i}f(\ell_i,y_i)|^2_{max}}\right)^2\,, \label{eq:proof-thm1-chain-inequalities}\\
    = &\left(\frac{1}{N}\sum_i g_i \sqrt{ \Var[\partial_\nu \ell_i]+ (\mathbb{E}[\partial_\nu \ell_i])^2} \right)^2\label{eq:proof-thm1-define-g}\,,
\end{align}
where:
\begin{itemize}
    \item In~\eqref{eq:var-d-generic-cost}, the inequality comes by applying Lemma~\ref{lem:var-sum}.
    \item In~\eqref{eq:proof-thm1-chain-rule}, we use the chain rule on the derivative of $f(\ell_i,y_i)$, that is, $\frac{\partial f(\ell_i,y_i)}{\partial \theta_\nu} = \frac{\partial f(\ell_i,y_i)}{\partial \ell_i}\cdot\frac{\partial \ell_i}{\partial \theta_\nu}$. Note that $y_i$ has no dependence on $\theta_\nu$.
    \item In~\eqref{eq:proof-thm1-use-lemma2},  we use Lemma~\eqref{lem:var-product}.
    \item In~\eqref{eq:proof-thm1-chain-inequalities}, we use the chain of inequalities $\Var[\partial_{\ell_i} f(\ell_i)] \le \mathbb{E}[(\partial_{\ell_i} f(\ell_i))^2] \leq | (\partial_{\ell_i} f(\ell_i))^2 |_{max} = | \partial_{\ell_i} f(\ell_i)|^2_{max}$.
    \item In~\eqref{eq:proof-thm1-define-g}, we define $g_i=\sqrt{2}| \partial_{\ell_i}f(\ell_i,y_i)|_{max}$. This completes the proof of the theorem.
\end{itemize}
\end{proof}

\section{Proof of Corollary~\ref{corollary:cor1}: Mean squared error and negative log-likelihood Loss Functions}\label{sec:appdx-cor}

\begin{corollary}[Barren plateaus in mean squared error and negative log-likelihood loss functions]\label{corollary:cor1-SM}
Consider the mean squared error loss function $\LC_{mse}$ defined in Eq.~\eqref{eq:loss-mse} and the negative log-likelihood loss $\LC_{log}$ defined in Eq.~\eqref{eq:loss-log}, with respective model-predicted labels $\tilde{y}_i(\thv)$ and model-predicted probabilities $p_i(y_i|\thv)$ both of the form of Eq.~\eqref{eq:measurement}. Suppose that the QNN has a barren plateau for the linear expectation values. That is, Eq.~\eqref{eq:var-BP} is satisfied for all $\tilde{y}_i(\thv)$ and $p_i(y_i|\thv)$. Then, assuming that $\tilde{y}_i(\thv) \in \OC(\poly(n))\; \forall i$ we have 
\begin{align}
    \Var[\partial_\nu \LC_{mse}] \in \mathcal{O}(1/\alpha^{n})\, , \label{eq:appdx-var-Lmse}
\end{align}
for  $\alpha > 1$. Similarly, assuming that $p_i(y_i|\thv) \in [b,1]\; \forall i,\thv$, where $b\in \Omega(1/\poly(n))$, we have
\begin{align}
    \Var[\partial_\nu \LC_{log}] \in \mathcal{O}(1/\alpha^{n})\, , \label{eq:appdx-var-Llog}
\end{align}
for  $\alpha > 1$.
\end{corollary}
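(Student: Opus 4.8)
The plan is to specialize Theorem~\ref{thm:var-generic-loss} to each of the two loss functions, the only real work being to compute the prefactor $g_i$ in Eq.~\eqref{eq:g_i} and to verify that it grows at most polynomially in $n$. I would begin by recalling that a barren plateau entails not only $\Var[\partial_\nu \ell_i(\thv)] \in \OC(1/\alpha^n)$ but also $\mathbb{E}[\partial_\nu \ell_i(\thv)] = 0$, so that the term $(\mathbb{E}[\partial_\nu \ell_i(\thv)])^2$ appearing in the bound of Theorem~\ref{thm:var-generic-loss} vanishes. This reduces the task to controlling the quantity $\left(\frac{1}{N}\sum_i g_i \sqrt{\Var[\partial_\nu \ell_i(\thv)]}\right)^2$.

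For the mean squared error loss I would identify $\ell_i = \tilde{y}_i(\thv)$ and $f(\ell_i,y_i) = (\ell_i - y_i)^2$, so that $\partial f/\partial \ell_i = 2(\ell_i - y_i)$ and hence $g_i = 2\sqrt{2}\,\max_{\ell_i}|\ell_i - y_i|$. Under the hypothesis $\tilde{y}_i(\thv) \in \OC(\poly(n))$, and since the labels $y_i$ are bounded, it follows that $g_i \in \OC(\poly(n))$. For the negative log-likelihood loss I would instead take $\ell_i = p_i(y_i|\thv)$ and $f(\ell_i,y_i) = -\log \ell_i$, giving $\partial f/\partial \ell_i = -1/\ell_i$ and $g_i = \sqrt{2}/\min_{\ell_i}\ell_i$. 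Here the clipping assumption $p_i(y_i|\thv) \in [b,1]$ with $b \in \Omega(1/\poly(n))$ is exactly what is needed to bound $g_i \leq \sqrt{2}/b \in \OC(\poly(n))$; this is the step that most requires care, as without a lower bound on the predicted probabilities the Jacobian entry $1/p_i$ could diverge.

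With $g_i \in \OC(\poly(n))$ established in both cases, I would substitute into the reduced bound. Each summand obeys $g_i \sqrt{\Var[\partial_\nu \ell_i(\thv)]} \in \OC(\poly(n)/\alpha^{n/2})$, and averaging over the $N$ data points preserves this scaling since the mean of finitely many terms each bounded by $M$ is itself bounded by $M$; squaring then yields $\Var[\partial_\nu \LC] \in \OC(\poly(n)/\alpha^n)$. The final step is the standard observation that a polynomial prefactor can be absorbed into a slightly smaller exponential base: for any $\alpha > 1$ one has $\poly(n)/\alpha^n \in \OC(1/{\alpha'}^n)$ for any $1 < \alpha' < \alpha$, which recovers the stated form $\OC(1/\alpha^n)$ and completes the argument for both $\LC_{mse}$ and $\LC_{log}$.
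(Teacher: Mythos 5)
Your proof is correct and follows essentially the same route as the paper's: specialize Theorem~\ref{thm:var-generic-loss} by computing $g_i = 2\sqrt{2}\max_{\ell_i}|\tilde{y}_i - y_i|$ for the mean squared error and $g_i = \sqrt{2}/p_i^{min} \leq \sqrt{2}/b$ for the negative log-likelihood, then use the boundedness assumptions to conclude that the polynomial prefactor cannot overcome the exponential decay. If anything, you are slightly more explicit than the paper in two harmless places: invoking $\mathbb{E}[\partial_\nu \ell_i(\thv)] = 0$ (which the paper states in its framework discussion of barren plateaus but leaves implicit in the corollary's proof) and spelling out the absorption $\poly(n)/\alpha^n \in \OC(1/{\alpha'}^n)$, while also correctly observing that the $1/N$ averaging makes the paper's extra assumption $N \in \OC(\poly(n))$ unnecessary.
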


\begin{proof}
First, we prove the result for the mean squared error loss function. Recall the mean squared loss function is defined as $\LC_{mse}(\vec\thv) = \frac{1}{N}\sum_{i=1}^{N}(y_i-\tilde{y}_i(\thv))^2$ (see Eq.~\eqref{eq:loss-mse}). Hence, in this case we have
\begin{align}
    \left|\frac{\partial f}{\partial \tilde{y}_i}\right|_{max}  
    & = 2(\tilde{y}_i^{max} - y_i)\,, 
\end{align}
where $\tilde{y}_i^{max}$ denotes the maximum value of $\tilde{y}_i$. Substituting this expression into Eq.~\eqref{eq:supp-var-generic-loss} we have
\begin{align}
    \Var[\partial_\nu \LC_{mse}(\theta)]& \leq  \frac{4}{N^2}\left( \sum_i (\tilde{y}_i^{max} - y_i) \sqrt{ \Var[\partial_\nu \tilde{y}_i]+ (\mathbb{E}[\partial_\nu \tilde{y}_i])^2} \right)^2.
\end{align}
Under the assumption that $\tilde{y}_i(\thv) \in \OC(\poly(n))\; \forall i,\thv$, and Eq.~\eqref{eq:var-BP} is satisfied for all $\tilde{y}_i(\thv)$ (i.e. the BP condition is met), we obtain Eq.~\eqref{eq:appdx-var-Lmse} as required.

Now we prove the result for the negative log-likelihood loss as in Eq.~\eqref{eq:loss-log}. We recall we can explicitly write the negative log-likelihood loss function as $\LC_{log}(\thv) = - \frac{1}{N} \sum_{i = 1}^{N} \log p_i(\thv)$, for which we have
\begin{align}
    \left|\frac{\partial f}{\partial p_i}\right|_{max} = \frac{1}{p_i^{min}} \, ,
\end{align}
where $p_i^{min}$ is the least possible value of $p_i(\thv)$. By substituting this into Eq.~\eqref{eq:supp-var-generic-loss} of Theorem~\ref{thm:var-generic-loss} and using the assumption $p_i(\thv) \in [b,1]\; \forall \vec{x}_i,\thv$, we have
\begin{align} \label{eq:var_log_general}
    {\rm Var}[\partial_\nu \LC_{log}(\thv)] &\leq \frac{2}{N^2}\left( \sum_i\left(\frac{1}{p_i^{min}}\right) \sqrt{ \Var[\partial_\nu p_i]+ (\mathbb{E}[\partial_\nu p_i])^2} \right)^2\,. 
\end{align}
Under the assumption that $p_i(\thv) \in [b,1]\; \forall i,\thv$ with $b\in \Omega(1/\poly(n))$, $N \in \OC(\poly(n))$, and Eq.~\eqref{eq:var-BP} is satisfied for all $p_i(\thv)$, we obtain Eq.~\eqref{eq:appdx-var-Llog} as required.
\end{proof}

\section{Extensions of Corollary~\ref{corollary:cor1} }\label{sec:appdx-cor-extensions}

We present two extensions to the supervised QML framework considered in Corollary \ref{corollary:cor1} where the true label  $y_i$ is a simple scalar. Specifically, we extend the results in Corollary~\ref{corollary:cor1} to the generalized mean square error loss function used in multivariate regression tasks and to loss functions based on the Kullback–Leibler divergence used in generative modelling. This shows how the results in our work are applicable to other loss functions used in machine learning tasks.  

\subsection{Generalized mean square error loss function for multivariate regression}
First, consider a multivariate regression task where the true label is now a vector of length $n_y$ i.e. $\vec{y}_i \in \RC^{n_y}$. For instance, here one can obtain the model-predicted label by measuring expectation values of $n_y$ different operators, each taking the form in Eq.~\eqref{eq:measurement} and representing a  distinct component of the vector. In this case, the generalized mean squared error reads
\begin{align}
    \LC^{(multi)}_{mse} = & \sum_i^N (\tilde{\vec{y}}_i(\thv) - \vec{y}_i)^2 \, ,\\
    = & \sum_i^N \sum_j^{n_y} (\tilde{y}^{(j)}_i(\thv) - y^{(j)}_i)^2 \, \label{eq:multi-regress-loss}, 
\end{align}
where $y_i^{(j)}$ ($\tilde{y}^{(j)}_i(\thv)$) is the $j$-th component of the $i$-th true (model-predicted) label. The form in Eq.~\eqref{eq:multi-regress-loss} is similar to the mean squared error loss function for a single valued true label in Eq.~\eqref{eq:loss-mse} with $n_y\times N$ training data. We now present a remark which generalizes the result for mean squared error in Corollary~\ref{corollary:cor1}.

\begin{remark}[Generalized mean squared error loss function]
Consider the generalized mean squared error loss function as defined in Eq.~\eqref{eq:multi-regress-loss} with the model-predicted labels $\tilde{\vec{y}}_i(\thv)$ where each component $\tilde{y}^{(j)}_i(\thv)$ is of the form defined in Eq.~\eqref{eq:measurement}. Suppose that the QNN has a barren plateau for linear expectation values, that is, Eq.~\eqref{eq:var-BP} is satisfied for all $\tilde{y}^{(j)}_i(\thv)$. Then, assuming that $\tilde{y}^{(j)}_i(\thv) \in \OC(\poly(n))\; \forall i,j$ we have 
\begin{align}
    \Var[\partial_\nu \LC^{(multi)}_{mse}]  \in \mathcal{O}(1/\alpha^{n})\, ,
\end{align}
for  $\alpha > 1$. 
\end{remark}
The proof proceeds with the same steps as in the consideration of the mean squared error loss in Corollary~\ref{corollary:cor1}. 

\subsection{KL-divergence-based loss function for generative modelling}
Now, consider generative modelling which is an unsupervised learning task (i.e. no given true labels). The goal is to learn an (unknown) underlying probability distribution $Q'(\vec{x})$ which generates a training sample set $\{ \vec{x}_i\}_{i=1}^N$. Unlike supervised learning tasks, here the trained model is expected to be capable of providing new samples with the optimized underlying probability distribution $P(\vec{x}; \thv)$, which are as close to $Q'(\vec{x})$ as possible. 

During the training process, it is common to choose loss functions as the KL divergence and reverse KL divergence, which take the respective forms 
\begin{align}
    \LC_{KL} &  = - \sum_{\vec{x}\in X} Q(\vec{x}) \log\left( \frac{P(\vec{x};\thv)}{Q(\vec{x})}\right) \,, \label{eq:loss-kld}\\
    \LC_{rev-KL} & = -  \sum_{\vec{x}\in X} P(\vec{x};\thv) \log\left( \frac{Q(\vec{x})}{P(\vec{x};\thv)}\right) \,. \label{eq:rev-loss-kld}
\end{align}
Here $Q(\vec{x}) = \sum_{i;\vec{x}_i = \vec{x}} \vec{x}_i/N$ is the probability of obtaining a sample $\vec{x}$ constructed from the training dataset, and $P(\vec{x};\thv)$ is the estimated probability of obtaining a sample $\vec{x}$ from the model. 
As $Q(\vec{x})$ is independent of the variational parameters $\thv$, $\LC_{KL}$ and $\LC_{rev-KL}$ can be treated in a similar way as $\LC_{log}$ in Eq.~\eqref{eq:loss-log}, leading to the following remark.

\begin{remark}[KL-divergence-based loss functions]
Consider the KL divergence and reverse KL divergence loss functions as defined in Eq.~\eqref{eq:loss-kld} and Eq.~\eqref{eq:rev-loss-kld} with the underlying model probability distribution 
$P(\vec{x};\thv)$ taking the form in Eq.~\eqref{eq:measurement} for all $\vec{x} \in X$. Suppose that the QNN has a barren plateau for the linear expectation values. That is, Eq.~\eqref{eq:var-BP} is satisfied for all $P(\vec{x};\thv)$. Then, assuming that $P(\vec{x};\thv),Q(\vec{x}) \in [b,1] \; \forall \vec{x},\thv$, where $b\in 1/\Omega(\poly(n))$, we have 
\begin{align}
    \Var[\partial_\nu \LC_{KL}] & \in \mathcal{O}(1/\alpha^{n})\, , 
\end{align}
and
\begin{align}
    \Var[\partial_\nu \LC_{rev-KL}] & \in \mathcal{O}(1/\alpha^{n})\, , \label{eq:appx-bp-rev-kld}
\end{align}
for  $\alpha > 1$.
\end{remark}

\begin{proof}
The proof for the result for KL divergence proceeds with the same steps as the proof for the negative log-likelihood in Corollary~\ref{corollary:cor1}.

For the result on the reverse KL divergence, we note that the loss function is a sum of functions $f(P(\vec{x},\thv), \vec{x}) = - P(\vec{x};\thv) \log\left( \frac{Q(\vec{x})}{P(\vec{x};\thv)}\right)$, leading to
\begin{align}
    \left| \frac{\partial f}{\partial P(\vec{x})}  \right|_{max} = \left| 1 + \log\left( \frac{P(\vec{x})}{Q(\vec{x})}\right)\right|_{max}\, .
\end{align}
By substituting this into Eq.~\eqref{eq:supp-var-generic-loss} of Theorem~\ref{thm:var-generic-loss}, we have
\begin{align} \label{eq:var_rev_kld}
    \Var[\partial_\nu \LC_{rev-KL}] &\leq \frac{2}{|X|^2} \sum_{\vec{x}\in X}\left(\left| 1 + \log\left(\frac{P(\vec{x})}{Q(\vec{x})}\right)\right|_{max} \sqrt{ \Var[\partial_\nu P(\vec{x};\thv)]+ (\mathbb{E}[\partial_\nu P(\vec{x};\thv)])^2} \right)^2\, , 
\end{align}
where $|X|$ is the cardinality of $X$. 
Under the assumption $Q(\vec{x}), P(\vec{x};\thv) \in [b,1] \; \forall \vec{x},\thv$, where $b\in 1/\Omega(\poly(n))$ and Eq.~\eqref{eq:var-BP} is satisfied $P(\vec{x};\thv)\;\forall \vec{x}$, we obtain Eq.~\eqref{eq:appx-bp-rev-kld} as required.
\end{proof}
We remark that similar to the negative log-likelihood, the assumption on $Q(\vec{x})$ and $P(\vec{x};\thv)$ to clip their values to be strictly greater than zero is common in practice when using loss functions based on the KL-divergence~\cite{abadi2016tensorflow}.

\section{Fisher information matrix results}\label{sec:appdx-FI}
In this section we show that under the conditions for which the logarithmic loss function shows a BP, the matrix elements of the FI matrix probabilistically exponentially vanish.

\begin{proposition}\label{prop:FI-SM}
Under the assumptions of Corollary \ref{corollary:cor1} for which the negative log-likelihood loss function has a BP according to Eq.~\eqref{eq:var-BP}, and assuming that the number of trainable  parameters in the QNN is in $ \OC(\poly(n))$, we have
\begin{equation}
    \mathbb{E}\big[\big|\tilde{F}_{\mu\nu}(\thv)\big|\big] \leq G(n) \,, \; \textrm{with}\;\, G(n) \in \OC(1/\alpha^n)\,,
\end{equation}
where $\tilde{F}_{\mu\nu}(\thv)$ are the matrix entries of $\tilde{F}(\thv)$ as defined in Eq.~\eqref{eq:FI_empirical}.
\end{proposition}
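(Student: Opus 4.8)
The plan is to work directly from the definition of the empirical Fisher information entries in Eq.~\eqref{eq:FI_empirical}, which read
\[
\tilde{F}_{\mu\nu}(\thv) = \frac{1}{N}\sum_{i=1}^{N} \partial_\mu \log p_i(y_i|\thv)\,\partial_\nu \log p_i(y_i|\thv)\,,
\]
and to reduce the desired bound to a single per-data-point estimate. First I would write each logarithmic derivative as $\partial_\mu \log p_i = \partial_\mu p_i/p_i$, take the modulus of $\tilde F_{\mu\nu}$, and apply the triangle inequality together with linearity of the expectation over $\thv$ to obtain
\[
\mathbb{E}\big[|\tilde{F}_{\mu\nu}(\thv)|\big] \leq \frac{1}{N}\sum_{i=1}^{N} \mathbb{E}\Big[\frac{|\partial_\mu p_i|\,|\partial_\nu p_i|}{p_i^2}\Big]\,.
\]
It then suffices to bound each summand uniformly in $i$, after which the $\tfrac{1}{N}\sum_i$ average preserves the bound.

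For a single summand I would first invoke the Corollary~\ref{corollary:cor1} clipping assumption $p_i(y_i|\thv)\in[b,1]$ with $b\in\Omega(1/\poly(n))$ to replace the factor $1/p_i^2$ by the deterministic bound $1/b^2$. For the remaining cross term I would apply the Cauchy--Schwarz inequality over the randomness of $\thv$,
\[
\mathbb{E}\big[|\partial_\mu p_i|\,|\partial_\nu p_i|\big] \leq \sqrt{\mathbb{E}[(\partial_\mu p_i)^2]}\,\sqrt{\mathbb{E}[(\partial_\nu p_i)^2]}\,,
\]
and then rewrite each second moment via
\[
\mathbb{E}[(\partial_\nu p_i)^2] = \Var[\partial_\nu p_i] + (\mathbb{E}[\partial_\nu p_i])^2\,.
\]
Because each $p_i$ is a linear expectation value of the form Eq.~\eqref{eq:measurement} assumed to exhibit a barren plateau, Eq.~\eqref{eq:var-BP} gives $\Var[\partial_\nu p_i]\in\OC(1/\alpha^n)$, while the accompanying fact $\mathbb{E}[\partial_\nu p_i]=0$ kills the second term; hence $\mathbb{E}[(\partial_\nu p_i)^2]\in\OC(1/\alpha^n)$ for every parameter direction, and likewise for $\mu$.

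Collecting these estimates, each summand is at most $\tfrac{1}{b^2}\OC(1/\alpha^n)$ uniformly in $i$, so
\[
\mathbb{E}\big[|\tilde{F}_{\mu\nu}(\thv)|\big] \leq \frac{1}{b^2}\,\OC(1/\alpha^n)\,.
\]
Since $b\in\Omega(1/\poly(n))$ implies $1/b^2\in\OC(\poly(n))$, the polynomial prefactor is absorbed into a marginally smaller exponential base, yielding $\mathbb{E}[|\tilde{F}_{\mu\nu}(\thv)|]\in\OC(1/\alpha^n)$ as claimed; the $\OC(\poly(n))$ bound on the number of QNN parameters is not needed to bound a single entry, but it ensures that the whole polynomially sized matrix is entrywise exponentially suppressed. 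I expect the only genuinely delicate point to be the control of the $1/p_i^2$ factor: without the clipping assumption the logarithmic derivatives could blow up as $p_i\to0$, so it is precisely the lower bound $b\in\Omega(1/\poly(n))$ that keeps the polynomial overhead from overwhelming the exponential decay. The use of zero mean in a barren plateau to turn the second moment into the variance, and Cauchy--Schwarz to handle the off-diagonal ($\mu\neq\nu$) entries, are the two remaining ingredients to track.
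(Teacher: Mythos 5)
Your proof is correct, and it takes a genuinely different route from the paper's. You bound $\mathbb{E}[|\tilde{F}_{\mu\nu}(\thv)|]$ directly from the definition in Eq.~\eqref{eq:FI_empirical}: triangle inequality over data points, $\partial_\mu \log p_i = \partial_\mu p_i/p_i$ with the clipping assumption supplying the deterministic factor $1/p_i^2 \leq 1/b^2$, Cauchy--Schwarz over the randomness of $\thv$, and then $\mathbb{E}[(\partial_\nu p_i)^2] = \Var[\partial_\nu p_i] + (\mathbb{E}[\partial_\nu p_i])^2$, which is exponentially small by Eq.~\eqref{eq:var-BP} together with $\mathbb{E}[\partial_\nu p_i]=0$ --- a fact the paper recalls below Eq.~\eqref{eq:var-BP} and relies on just as implicitly, since its own final bound also carries $(\mathbb{E}[\partial_\nu \ell_i])^2$ terms. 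The paper's proof instead recycles the machinery of Theorem~\ref{thm:var-generic-loss}: it bounds $\Var[\partial_\nu f(\ell_i,y_i)]$ and $\mathbb{E}[\partial_\nu f(\ell_i,y_i)]$ via the constants $g_i$ and covariance/Cauchy--Schwarz decompositions, deduces $\mathbb{E}[\tilde{F}_{\mu\nu}] \leq H(n)$ for the \emph{signed} entries, and only then converts to absolute values by exploiting positive semidefiniteness of $\tilde{F}$: the trace of the expected matrix is at most $N_p H(n)$, hence so is every eigenvalue and every entry in magnitude, giving $G(n) = N_p H(n)$. That PSD/trace detour is exactly where the hypothesis $N_p \in \OC(\poly(n))$ enters, and you correctly observed that your argument never needs it for a single entry: your per-entry bound carries only the $1/b^2 \in \OC(\poly(n))$ prefactor rather than $N_p$ times an analogous factor, so it is marginally sharper. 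Conversely, the paper's route buys reuse and generality --- its intermediate inequalities hold for generic differentiable $f$, not just $f=\log$, and the trace argument simultaneously controls the eigenvalues of $\tilde{F}$, which is the quantity relevant to the paper's natural-gradient discussion. Both proofs end with the same (slightly loose, and harmless) bookkeeping step of absorbing a $\poly(n)$ prefactor into a marginally smaller exponential base in order to state the conclusion as $\OC(1/\alpha^n)$.
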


\begin{proof}
From the proof of Theorem \ref{thm:var-generic-loss} we have 
\begin{align}
    \Var[\partial_\nu f(\ell_i,y_i)] \leq g_i^2 \left( \Var[\partial_\nu \ell_i]+ (\mathbb{E}[\partial_\nu \ell_i])^2 \right) \label{eq:appdx-var-partial-f}
\end{align}
for all $i$. 
We note that the expectation value is similarly bounded
\begin{align}
    \mathbb{E}[\partial_\nu f(\ell_i,y_i)] &= \mathbb{E}[\partial_{\ell_i} f(\ell_i,y_i) \partial_{\nu} \ell_i] \\
    &=  \mathbb{E}[\partial_{\ell_i}f(\ell_i,y_i)] \mathbb{E}[\partial_{\nu} \ell_i] + \Cov[\partial_{\ell_i}f(\ell_i,y_i), \partial_{\nu} \ell_i]\\
    &\leq \sqrt{\Var[\partial_{\ell_i}f(\ell_i,y_i)]\Var[\partial_{\nu} \ell_i]}\\
    &= \frac{1}{\sqrt{2}} g_i \sqrt{\Var[\partial_{\nu} \ell_i]}\,. \label{eq:appdx-E-partial-f}
\end{align}
where the first equality is an application of the chain rule, the second equality is due to the definition of the covariance, the first inequality is due to the Cauchy-Schwarz inequality, and in the final equality we have used the definition of $g_i$ in Eq.~\eqref{eq:g_i-SM}.
This enables us to write the bound
\begin{align}
    \mathbb{E}\left[\partial_\nu f(\ell_i,y_i)\partial_\mu f(\ell_i,y_i)\right] &\leq \mathbb{E}\left[\partial_\nu f(\ell_i,y_i)]\mathbb{E}[\partial_\mu f(\ell_i,y_i)\right] + \Cov\left[\partial_\nu f(\ell_i,y_i),\partial_\mu f(\ell_i,y_i)\right] \\
    &\leq \mathbb{E}\left[\partial_\nu f(\ell_i,y_i)]\mathbb{E}[\partial_\mu f(\ell_i,y_i)\right] + \sqrt{\Var\left[\partial_\nu f(\ell_i,y_i)\right]\Var\left[\partial_\mu f(\ell_i,y_i)\right]}\\
    &\leq \frac{1}{2}g_i^2\sqrt{\Var[\partial_{\nu} \ell_i]\Var[\partial_{\mu} \ell_i]} + g_i^2\sqrt{\left( \Var[\partial_\nu \ell_i]+ (\mathbb{E}[\partial_\nu \ell_i])^2 \right)\left( \Var[\partial_\mu \ell_i]+ (\mathbb{E}[\partial_\mu \ell_i])^2 \right)}
\end{align}
where again, the first two inequalities come from the definition of the covariance and an application of the Cauchy-Schwarz inequality, and in order to obtain the third inequality we have used Eq.~\eqref{eq:appdx-var-partial-f} and Eq.~\eqref{eq:appdx-E-partial-f}. We now consider the negative log-likelihood loss function, that is where $\ell_i = p_i$ are probabilities and $f(\ell_i,y_i)=\log p_i$ for all $i$.
Then, if our assumptions are satisfied, namely Eq.~\eqref{eq:var-BP} is satisfied for all $p_i(\thv)$, and $p_i(\thv) \in [b,1]\; \forall i,\thv$ with $b\in \Omega(1/\poly(n))$, we have
\begin{align}
    \mathbb{E}[\tilde{F}_{\mu\nu}]= \mathbb{E}\left[\sum^{N}_i\partial_\mu \log p_i(\thv)\partial_\nu \log p_i(\thv)\right] \leq H(n)\,, \; \textrm{with}\; H(n) \in \mathcal{O}(1/\alpha^{n}) \label{eq:appdx-E-F}
\end{align}
for all $\theta_\mu$ and $\theta_\nu$. By considering diagonal terms, this means that we can bound $\Tr[\mathbb{E}[\tilde{F}_{\mu\nu}]]\leq N_pH(n)$, where $N_p$ is the dimension of parameter space. As $\tilde{F}$ is positive semidefinite, the individual eigenvalues of $\tilde{F}$ also each satisfy this bound, and this allows us to bound the individual matrix elements as
\begin{align}
    \mathbb{E}\big[\big|\tilde{F}_{\mu\nu}\big|\big] \leq G(n) \,, \; \textrm{with}\; G(n) = N_pH(n)\,,
\end{align}
as required.
\end{proof}

Now we show a corollary that assumes the QNN structure allows us to use the parameter shift rule to evaluate partial derivatives.

\begin{corollary}
Under the assumptions of Corollary \ref{corollary:cor1} for which the negative log-likelihood loss function has a BP according to Eq.~\eqref{eq:var-BP}, and assuming that the QNN structure allows for the use of the parameter shift rule, we have
\begin{equation}
    \emph{Pr}\left(\left|\tilde{F}_{\mu\nu} - \mathbb{E}[\tilde{F}_{\mu\nu}] \right| \geq c\right) \leq \frac{Q(n)}{c^2}\,,
\end{equation}
where $\mathbb{E}[\tilde{F}_{\mu\nu}]\leq H(n)$ and $Q(n),H(n)\in \OC(1/\alpha^n)$.
\end{corollary}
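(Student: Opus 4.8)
The plan is to obtain the stated tail bound directly from Chebyshev's inequality, which gives $\Pro\!\left(|\tilde{F}_{\mu\nu} - \mathbb{E}[\tilde{F}_{\mu\nu}]| \geq c\right) \leq \Var[\tilde{F}_{\mu\nu}]/c^2$. The whole task then reduces to showing $\Var[\tilde{F}_{\mu\nu}] \leq Q(n)$ with $Q(n)\in\OC(1/\alpha^n)$, because the companion bound $\mathbb{E}[\tilde{F}_{\mu\nu}]\leq H(n)$ with $H(n)\in\OC(1/\alpha^n)$ is already supplied by Proposition~\ref{prop:FI-SM} (indeed $\mathbb{E}[\tilde{F}_{\mu\nu}]\leq\mathbb{E}[|\tilde{F}_{\mu\nu}|]$). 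Recalling from Eq.~\eqref{eq:FI_empirical} that $\tilde{F}_{\mu\nu}$ is, up to the $1/N$ normalization, the sum $\sum_i \partial_\mu \log p_i\,\partial_\nu \log p_i$, I would first apply Supplemental Lemma~\ref{lem:var-sum} to peel off the sum over data points, reducing matters to bounding $\Var[\partial_\mu \log p_i\,\partial_\nu \log p_i]$ for each fixed $i$ and absorbing the $N\in\OC(\poly(n))$ factor at the end.

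Next I would bound the variance of the product of the two log-derivatives via Supplemental Lemma~\ref{lem:var-product} with $X=\partial_\mu \log p_i$ and $Y=\partial_\nu \log p_i$, giving $\Var[XY]\leq 2\Var[X]\,|Y^2|_{max} + 2(\mathbb{E}[X])^2\Var[Y]$. The two variance factors $\Var[\partial_\mu \log p_i]$, $\Var[\partial_\nu \log p_i]$ and the squared-mean factor $(\mathbb{E}[\partial_\mu \log p_i])^2$ are all exponentially small: this follows by reusing the intermediate estimates already derived in the proof of Proposition~\ref{prop:FI-SM}, namely Eq.~\eqref{eq:appdx-var-partial-f} and Eq.~\eqref{eq:appdx-E-partial-f} specialized to $f=\log p_i$ and $\ell_i = p_i$, combined with the barren-plateau hypothesis $\Var[\partial_\nu p_i]\in\OC(1/\alpha^n)$ and the clipping assumption $p_i\in[b,1]$ with $b\in\Omega(1/\poly(n))$.

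The step where the parameter shift rule is essential—and which I expect to be the main obstacle—is controlling $|Y^2|_{max}=|\partial_\nu \log p_i|^2_{max}$, the \emph{worst-case} rather than average value of the squared log-derivative. Bounds on the mean and variance of $\partial_\nu \log p_i$ do not by themselves control its maximal magnitude, so I would invoke the parameter shift rule to write $\partial_\nu p_i = \tfrac{1}{2}\big[p_i(\thv_+) - p_i(\thv_-)\big]$ as a scaled difference of probabilities at shifted parameters. Since each such probability lies in $[0,1]$, this yields the pointwise bound $|\partial_\nu p_i|\leq\tfrac{1}{2}$, and hence $|\partial_\nu \log p_i| = |\partial_\nu p_i|/p_i \leq 1/(2b)\in\OC(\poly(n))$ using the clipping bound. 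Therefore $|Y^2|_{max}\in\OC(\poly(n))$, so each term in Lemma~\ref{lem:var-product} is a product of an exponentially small factor with at most a polynomial one, establishing $\Var[\partial_\mu \log p_i\,\partial_\nu \log p_i]\in\OC(1/\alpha^n)$.

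Finally, I would reassemble the pieces: inserting this per-datapoint bound into the Lemma~\ref{lem:var-sum} estimate, the $1/N^2$ prefactor cancels the $N^2$ produced by squaring a sum of $N$ terms of the same order, leaving $\Var[\tilde{F}_{\mu\nu}]\in\OC(\poly(n)/\alpha^n)=:Q(n)$, which lies in $\OC(1/\alpha^n)$ after absorbing polynomial factors into a marginally smaller base. Together with Chebyshev's inequality this gives the claimed concentration bound with $Q(n),H(n)\in\OC(1/\alpha^n)$.
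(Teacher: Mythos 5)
Your proposal is correct and follows essentially the same route as the paper's proof: Chebyshev's inequality, Supplemental Lemmas~\ref{lem:var-sum} and~\ref{lem:var-product} to reduce to per-datapoint product variances, the parameter shift rule giving $|\partial_\nu p_i|\leq 1/2$ combined with the clipping bound $p_i\geq b$ to control the worst-case log-derivative, and the intermediate estimates of Proposition~\ref{prop:FI-SM} for the remaining means and variances. The only cosmetic difference is that you bound $|\partial_\nu \log p_i|_{max}\leq 1/(2b)$ directly, whereas the paper routes the same chain-rule step through $g_i$ and $|\partial_\nu \ell_i|_{max}$ before specializing to the log-likelihood.
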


\begin{proof}
We have
\begin{align}
    \Var\left[\sum_i^{N}\partial_\mu f(\ell_i,y_i)\partial_\nu f(\ell_i,y_i)\right]&\leq \left( \sum^{N}_i \sqrt{\Var\left[ \partial_\mu f(\ell_i,y_i) \partial_\nu f(\ell_i,y_i)\right]} \right)^2 \\
    &\leq \left( \sum^{N}_i \sqrt{2\Var[\partial_\mu f(\ell_i,y_i)] |(\partial_\nu f(\ell_i,y_i))^2|_{max} + 2(\mathbb{E}[\partial_\nu f(\ell_i,y_i)])^2\Var[\partial_\nu f(\ell_i,y_i)]}\right)^2 \\
    &\leq \left( \sum^{N}_i \sqrt{2} \left|(\partial_\nu f(\ell_i,y_i))\right|_{max} \sqrt{\Var[\partial_\mu f(\ell_i,y_i)] + 2(\mathbb{E}[\partial_\mu f(\ell_i,y_i)])^2}\right)^2\\
    &= \left( \sum^{N}_i \sqrt{2} g_i \left|(\partial_\nu \ell_i)\right|_{max} \sqrt{\Var[\partial_\mu f(\ell_i,y_i)] + 2(\mathbb{E}[\partial_\mu f(\ell_i,y_i)])^2}\right)^2\\
    &\leq \left( \sum^{N}_i \sqrt{2} g_i^2 \left|(\partial_\nu \ell_i)\right|_{max} \sqrt{  2\Var[\partial_\nu \ell_i]+ (\mathbb{E}[\partial_\nu \ell_i])^2 }\right)^2 \label{eq:appdx-var-partial-f-2}
\end{align}
where in the first inequality we have used Supplementary Lemma \ref{lem:var-sum}, in the second inequality we have used Supplementary Lemma \ref{lem:var-product}, the third inequality comes by bounding $\Var[\partial_\nu f(\ell_i,y_i)]$ by the maximum value of $f(\ell_i,y_i)^2$, the penultimate line is an application of the chain rule, and the final inequality comes by using Eq.~\eqref{eq:appdx-var-partial-f} and Eq.~\eqref{eq:appdx-E-partial-f} from Proposition \ref{prop:FI-SM}.
Now consider the logarithmic loss function, where $l_i(\thv) = p_i(\thv)$ is a probability. Assuming that the QNN structure allows for the use of the parameter shift rule, we have \begin{align}
    |(\partial_\nu p_i)|_{max} = \frac{1}{2}|(p_i(\thv_+)  - p_i(\thv_-))|_{max}\leq \frac{1}{2}\,,\label{eq:appdx-partial-p}
\end{align}
for all $i$, where we have used the fact that $p_i(\thv_+)$ and $p_i(\thv_-)$ are probabilities and thus bounded. This allows us to write,
\begin{align}
    \Var[\tilde{F}_{\mu\nu}] &= \Var\left[\sum_i^{N}\partial_\mu \log p_i\,\partial_\nu \log p_i\right]\\
    &\leq \frac{1}{2}\left( \sum^{N}_i g_i^2 \sqrt{  2\Var[\partial_\nu p_i]+ (\mathbb{E}[\partial_\nu p_i])^2 }\right)^2\,,
\end{align}
where in the first line we have explicitly written the matrix element of the empirical FI matrix, and the inequality comes from Eq.~\eqref{eq:appdx-var-partial-f-2} and substituting in the bound Eq.~\eqref{eq:appdx-partial-p}. Then, if our assumptions are satisfied, namely Eq.~\eqref{eq:var-BP} is satisfied for all $p_i(\thv)$, and $p_i(\thv) \in [b,1]\; \forall i,\thv$ with $b\in \Omega(1/\poly(n))$, we have
\begin{align}
    \Var[\tilde{F}_{\mu\nu}] &\leq Q(n)\,, \; \textrm{with}\; Q(n) \in \mathcal{O}(1/\alpha^{n}) \label{eq:appdx-var-F}\,,
\end{align}
for all $\theta_\mu$ and $\theta_\nu$. By inspecting Eq.~\eqref{eq:appdx-E-F} and Eq.~\eqref{eq:appdx-var-F} and using Chebyshev's inequality we have 
\begin{equation}
    \textrm{Pr}\left(\left|\tilde{F}_{\mu\nu} - \mathbb{E}[\tilde{F}_{\mu\nu}] \right| \geq c\right) \leq \frac{\Var[\tilde{F}_{\mu\nu}]}{c^2} \leq \frac{Q(n)}{c^2}\,,
\end{equation}
where $\mathbb{E}[\tilde{F}_{\mu\nu}]\leq H(n)$, where $Q(n),H(n)\in \OC(1/\alpha^n)$.
\end{proof}

\subsection{Assuming log-likelihood loss has BP is insufficient to show exponentially small FI matrix elements}

To prove the above results, we make the assumption that linear expectation values have exponentially vanishing variance of their partial derivatives (i.e. they display a barren plateau). Here we show that relaxing these assumptions, such that the negative log-likelihood loss function has a plateau, is insufficient to guarantee exponentially vanishing elements of the FI matrix. We remark that this is an artefact of the fact that in QML, the loss functions are constructed from many data points. Consider the case where the dataset consists of 2 data points $\{ (\vec{x}_1,y_1), (\vec{x}_2,y_2) \}$. The log-likelihood in this case reads
\begin{align}
    \partial_\nu \LC_{log}(\thv) = \frac{1}{2}\left( \partial_\nu \log(p(y_1|\vec{x}_1;\thv) + \partial_\nu\log(p(y_2|\vec{x}_2;\thv) \right) \,.
\end{align}
In the presence of barren plateaus in the log-likelihood loss landscape, we have
\begin{align}
    \Var[\partial_\nu \LC_{log}(\thv)] & =  \mathbb{E}[(\partial_\nu \LC_{log})^2] -  (\mathbb{E}[\partial_\nu \LC_{log}])^2 \, , \\
    & = \mathbb{E}\left[\frac{1}{4}\left( \partial_\nu \log(p(y_1|\vec{x}_1;\thv) + \partial_\nu\log(p(y_2|\vec{x}_2;\thv) \right)^2\right] \,, \label{eq:appdx-E-2datapoints} \, ,
\end{align}
where in the second line we have used the fact that  $\mathbb{E}[\partial_\nu \LC_{log}]=0$~\cite{cerezo2020cost,sharma2020trainability,pesah2020absence}. From Eq.~\eqref{eq:appdx-E-2datapoints} it is clear to see that the barren plateau condition ($\Var[\partial_\nu \LC_{log}(\thv)]$ being exponentially vanishing)  can be satisfied with $\{\log(p(y_1|\vec{x}_i;\thv)\}_{i \in \{1,2\}}$ having exponentially similar magnitudes but opposite signs across the landscape. Hence, in order to guarantee exponentially small FI matrix elements, it is not sufficient to only assume that the negative log-likelihood loss function has a barren plateau. Rather, extra assumptions have to be made, for instance that the linear expectation values have a barren plateau, as assumed in Corollary \ref{corollary:cor1}. 

\section{Numerical implementations}\label{sec:appdx-numerics}
In this section we provide technical details for our numerical results. 

\subsection{Dimensional reduction of features on the MNIST dataset}

Here we describe explicitly how the images in the MNIST dataset are  reduced to length-$n$ real-valued vectors using principal component analysis (PCA)~\cite{jolliffe2005principal}. Consider a dataset $\{\vec{x}_i, y_i\}_i^{N}$ with $N$ data points, such that each input $\vec{x}_i$ is a vector of length $784$ (obtained by vectorizing the gray-scaled $28\times 28$ image). We perform PCA on the dataset with the following steps.
\begin{enumerate}
    \item Compute the average of the input data points $\vec{x}_{avg}= \sum_i \vec{x}_i / N$.
    \item Normalize each individual input data point as $\tilde{\vec{x}}_i = \vec{x}_i - \vec{x}_{avg}$ for all $i$.
    \item Construct a matrix of this normalized dataset $X = (\tilde{\vec{x}}_1, \tilde{\vec{x}}_2, ... ,\tilde{\vec{x}}_{N})^T$ of size $N \times 784$ where the $i-$th row of the matrix $X$ represents a $i-$th input data point, $\vec{x}_i^T$.
    \item From the matrix $X$, construct a square matrix $X^TX = \sum_i \tilde{\vec{x}_i}^T\tilde{\vec{x}_i} $ of size $784 \times 784$ and perform an eigenvalue decomposition of $X^T X$.
    \item Keep the $n$ eigenvectors corresponding to the largest $n$ eigenvalues of $X^TX$. These are the principal components of the dataset. Construct a matrix $M$ of size $784 \times n$ with $n$ column vectors corresponding to these $n$ eigenvectors.
    \item Compute $X M$, leading to a new data matrix of size $N \times n$. This matrix multiplication corresponds to a projection of each individual input data point into the sub-space formed by these $n$ principal components. 
\end{enumerate}

\subsection{CHE architecture}\label{appx:che-architecture}
Here, we describe in detail the architecture of the CHE in Fig.~\ref{fig:embedding}(c). The embedding, originally proposed in~\cite{havlivcek2019supervised}, is based on the Instantaneous Quantum Polynomial (IQP) architecture which takes the form of $U_{IQP} = H^{\otimes n} U_\mathcal{Z}  H^{\otimes n}$ where $H$ is the Hadamard gate and $U_\mathcal{Z}$ is an arbitrary random diagonal unitary acting on all qubits. Sampling from the output distribution of this IQP circuit has been analytically shown to be classically hard to simulate and proposed as one of the early approaches for demonstrating quantum supremacy~\cite{bremner2011classical}. In a similar fashion, the CHE scheme has been conjectured to be classically hard to simulate for more than two layers. As shown in Fig.~\ref{fig:che-architecture},  one layer consists of the Hadamard gates on all qubits followed by the data-encoded unitary $W(\vec{x}_i)$ which consists of single and two-qubit gates diagonal in the computational basis. 
Specifically, given that $\vec{x}_i$ has length $n$, $W(\vec{x}_i)$ is defined as
\begin{align}
    W(\vec{x}_i) =  \left( \prod_{j<k} e^{-i x^{(j)}_i x^{(k)}_i Z_jZ_k} \right)\left(\prod_{j=1}^n e^{-ix^{(j)}_i Z_j} \right)\, ,
\end{align}
where $x_i^{(j)}$ is the $j$-th component of $\vec{x}_i$, and $Z_j$ is the Pauli-$Z$ operator on $j$-th qubit. We note that
$e^{-ix^{(j)}_i Z_j}$ is a single-qubit rotation on $j-$th qubit encoding $j$-th component of the $\vec{x}_i$, and $e^{-i x^{(j)}_i x^{(k)}_i Z_jZ_k}$ is a two-qubit $ZZ$ gate that encodes the product of the $j$-th and $k$-th components of the data. The two-qubit $ZZ$ gates act on all possible pairs of qubits. 
\begin{figure}
	\includegraphics[width= .8 \columnwidth]{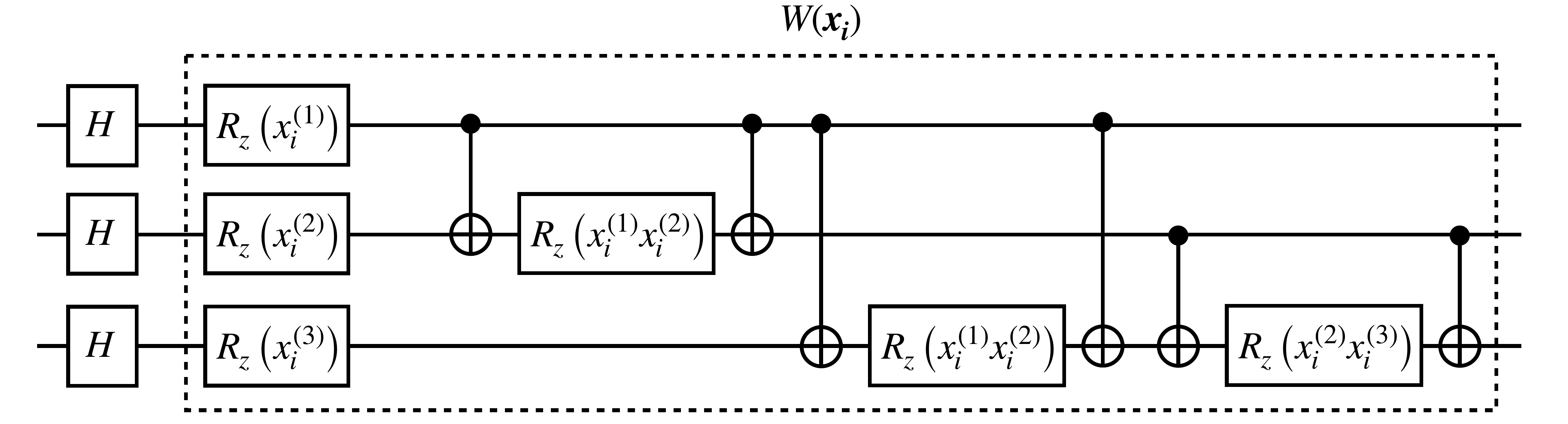}
	\caption{\textbf{CHE architecture.} One layer of the embedding is comprised of an $R_Z$ rotation on each qubit, where the rotation angle on the $j$-th qubit corresponds to the $j$-th component of $\vec{x}_i$, followed by a series of two-qubit $ZZ$ gates on all pairs of qubits. For each pair of qubits $j$ and $k$, the two-qubit ZZ gate encodes the product of the $j$-th and $k$-th components of $\vec{x}_i$.
	}\label{fig:che-architecture}
\end{figure}

\subsection{QCNN architecture}\label{appx:qcnn-architecture}
We now provide details on the QCNN architecture used in this work. QCNNs are motivated by the structure of classical convolutional neural networks (which in turn are motivated by the structure of the visual cortex). Thus, in a QCNN, a series of convolutional layers are interleaved with pooling layers which  reduce the number of degrees of freedom, while preserving the relevant features of the input state~\cite{cong2019quantum}. Effectively,  after each pooling layer,  the number of remaining qubits is reduced by (about) half. Thus, for an initial input of $n$ qubits, the total depth of the QCNN is $\OC(\log(n))$. As illustrated in Fig.~\ref{fig:qcnn}, the convolutional layer is comprised of two layers of parametrized two-qubit unitary blocks acting on alternating pairs of nearest-neighbor qubits. In the pooling layer, we apply CNOT gates on pairs of nearest-neighbor qubits where the controlled qubits are discarded for the later parts of the circuit. We note that, although one could consider more general measurements and controlled unitaries in the pooling layers in a more general setting, we do not consider such measurements in our numerics. Finally, after reducing down to two qubits, another parametrized unitary block (as in the convolutional layer) together with parametrized single-qubit $R_X$ rotation gates are applied before the measurement.

\begin{figure}
	\includegraphics[width= .8 \columnwidth]{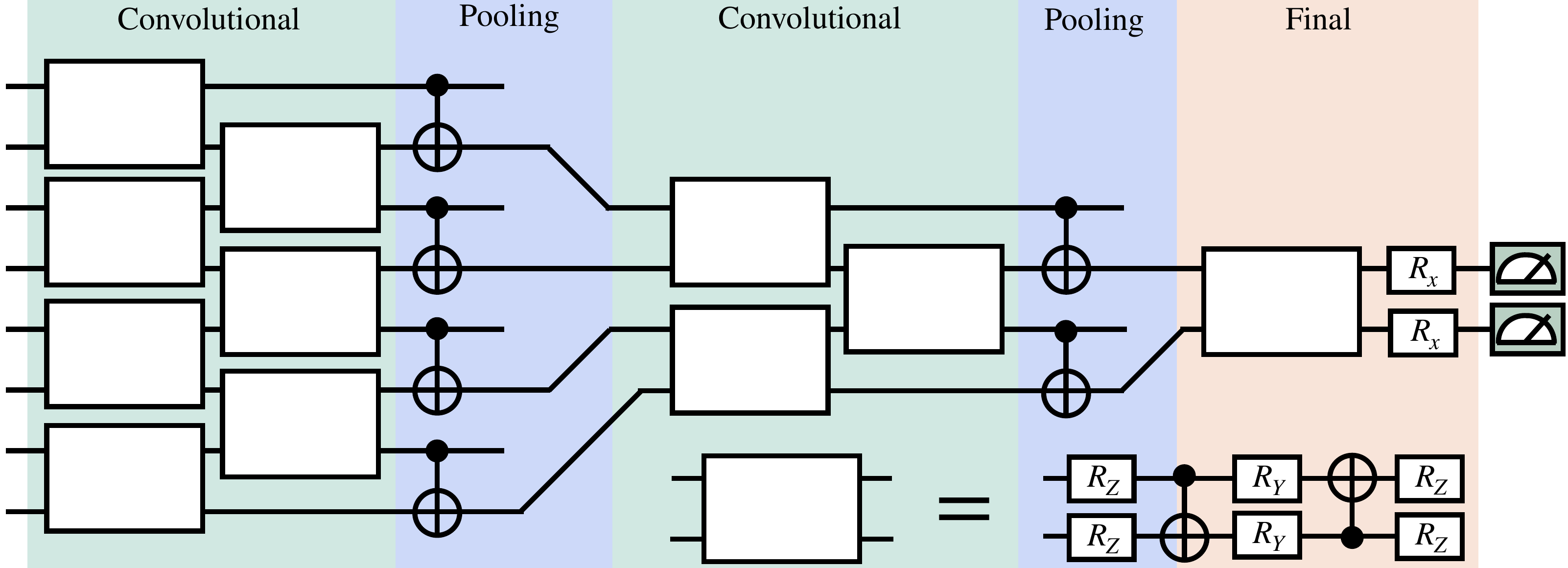}
	\caption{\textbf{QCNN architecture.} We illustrate how the QCNN circuit is constructed for 8 qubits. After all convolutional and pooling layers reducing the system size is reduced down to 2 qubits. Then, one applies a final layer comprised of a parametrized 2-qubit unitary block and parametrized single qubit $R_X$ rotations before performing measurements. We also present the structure of the parametrized 2-qubit unitary blocks utilized in this work .
	}\label{fig:qcnn}
\end{figure}

\end{document}